\newtheorem{proposition?}{Proposition?}
\newtheorem{lemma}{Lemma}
\newtheorem{obs}{Observation}
\newtheorem{rem}{Remark}
\theoremstyle{definition}
\newcommand{\tr}[1]{\mathrm{tr}\left[#1\right]} 
\newcommand{\ketbra}[1]{\ensuremath{| #1 \rangle \!\langle #1 |}}
\newcommand{\mean}[1]{\langle #1\rangle}
\newcommand{\id}{{\rm Id}} 
\newcommand{\A}{\mathsf{A}}
\newcommand{\B}{\mathsf{B}}
\newcommand{\C}{\mathsf{C}}
\newcommand{\E}{\mathsf{E}}
\newcommand{\OO}{\mathsf{O}}
\newcommand{\Go}{\mathsf{G}}
\newcommand{\Q}{\mathsf{Q}}
\newcommand{\W}{\mathsf{W}}
\newcommand{\td}{\widetilde}
\newcommand{\II}{\mathcal I}
\newcommand{\EE}{\mathcal E}
\newcommand{\UU}{\mathcal U}
\newcommand\ND{\mathrel{\stackrel{\makebox[0pt]{\mbox{\normalfont\tiny ND}}}{\rightarrow}}}
\newcommand\NND{\mathrel{\stackrel{\makebox[0pt]{\mbox{\normalfont\tiny ND}}}{\leftrightarrow}}}
\begin{document}

\title{Leggett-Garg macrorealism and the quantum nondisturbance conditions}
\author{Roope Uola}
\thanks{All authors contributed equally.}
\affiliation{Naturwissenschaftlich-Technische 
Fakult\"at, Universit\"at Siegen, Walter-Flex-Strasse 3, 57068 Siegen, Germany}
\affiliation{D{\'e}partement de Physique Appliqu{\'e}e, Universit{\'e} de Gen{\`e}ve, CH-1211 Gen{\`e}ve, Switzerland}
\email{roope.uola@gmail.com}

\author{Giuseppe Vitagliano}
\thanks{All authors contributed equally.}
\affiliation{Institute for Quantum Optics and Quantum Information (IQOQI), Austrian Academy of Sciences, Boltzmanngasse 3, 1090 Vienna, Austria}
\email{giuseppe.vitagliano@oeaw.ac.at}\email{costantino.budroni@oeaw.ac.at}

\author{Costantino Budroni}
\thanks{All authors contributed equally.}
\affiliation{Institute for Quantum Optics and Quantum Information (IQOQI), Austrian Academy of Sciences, Boltzmanngasse 3, 1090 Vienna, Austria}
\affiliation{Faculty of Physics, University of Vienna, Boltzmanngasse 5, 1090 Vienna, Austria}
\date{11th~July~2019}. 

\begin{abstract}  
We investigate the relation between a refined version of Leggett and Garg conditions for macrorealism, namely the 
no-signaling-in-time (NSIT) conditions, and the quantum mechanical notion of nondisturbance between measurements. We show 
that all NSIT conditions are satisfied for any state preparation if and only if certain compatibility criteria on the 
state-update rules relative to the measurements, i.e. quantum instruments, are met. The systematic treatment of NSIT 
conditions supported by structural results on nondisturbance provides a unified framework for the 
discussion of the the clumsiness loophole. This extends previous approaches and allows for a tightening of the loophole via 
a hierarchy of tests able to disprove a larger class of macrorealist theories. Finally, we discuss perspectives 
for a resource theory of quantum mechanical disturbance related to violations of macrorealism.

\end{abstract}
\maketitle

\section{Introduction}

The notion of {\it macrorealism} can be traced back to the intuition that assigning a definite value to a macroscopic variable, e.g. the position of the moon, should be possible at any time and that, in principle, one should be able to observe such a value with negligible disturbance. A paradigmatic example showing the counterintuitive consequences of violating these assumptions is the famous thought experiment by 
Schr\"odinger involving a cat in a superposition of being dead and alive. Leggett and Garg \cite{LeggettPRL1985,EmaryRPP2014} formalized the assumptions as macrorealism {\it per se} and noninvasive measurability with the goal of testing these principle in the laboratory. For this purpose they derived conditions that observed statistics must satisfy. The results are the so-called Leggett-Garg inequalities (LGI), which similarly to Bell inequalities \cite{Bell1964} are able to put at test quantum mechanics versus the predictions of macrorealist theories.

In contrast to Bell scenarios in which laboratories are far apart, in a single system evolving in time signalling is possible in one direction, i.e., the direction of time flow. Connected with this fact, refinements of the original LGI have been proposed (cf., e.g., Refs.~\cite{Morikoshi2006, DeviPRA2013, KoflerPRA2013, LiSR2012, ClementePRA2015, ClementePRL2016}) leading to the so-called no-signaling-in-time (NSIT) conditions. Importantly, NSIT conditions provide necessary and sufficient conditions for an observed probability distribution to admit a macrorealist model~\cite{ClementePRL2016}, provided that the sequence of maximal length is measurable~\footnote{Contrary to Bell and contextuality tests, where the measurement one can jointly perform are limited to compatible measurements, in Leggett-Garg test a sequence of all possible measurements is, in principle, allowed. By definition of macrorealist models, then, the obtained probability distribution must correspond to the macrorealist hidden variable model. Hence, it is sufficient to compare the probability observed on shorter sequences, with the marginal of such a distribution. If further constraints are imposed on the joint measurement that can be performed in a Leggett-Garg test, e.g., due to practical limitations, then the NSIT are no longer necessary and sufficient conditions for macrorealism.}. 
The NSIT conditions have a direct interpretation in terms of disturbance introduced by the measurement apparatus on the system. For the simple case of a single observable measured at two time steps, say $\Q(t_1)$ evolving into $\Q(t_2)$, the NSIT condition is
satisfied if by monitoring $\Q(t_2)$ one cannot detect whether a measurement of $\Q(t_1)$ has been performed. In other words, $\Q(t_1)$ cannot ``disturb'' the statistics of $\Q(t_2)$.

From an experimental perspective, it is evident that the disturbance leading to a violation of NSIT may be classically explainable in terms of imperfections in the measurement apparatus. Thus, a major challenge opens in all practical Leggett-Garg tests, the {\it clumsiness loophole} \cite{WildeMizel2012}. Such loophole cannot be completely closed due to practical, i.e., unavoidable imperfections, as well as fundamental reasons, e.g., the Heisenberg principle in its instance about the fundamental disturbance associated with measurements of incompatible observables (cf Ref.~\cite{BLW_review}). So far, several experiments have been performed showing the violation of LGI, see, e.g., Ref.~\cite{EmaryRPP2014} for a review of older experiments and Refs.~\cite{RobensPRX2015, ZhouPRL2015, WangPRA2016, KneeNC2016, Formaggio2016, HuffmanPRA2017, KatiyarNJP2017, WangPRA2018} for more recent ones. Furthermore, ways of tightening the loophole have been proposed and implemented, which include special measurement implementations and additional tests involving different evolutions or different preparations \cite{LeggettPRL1985, WildeMizel2012, KneeNC2012, GeorgePNAS2013, LG_QNDPRL2015,  RobensPRX2015, KneeNC2016, EmaryPRA2017, HuffmanPRA2017}.

At this point, it is worth noticing that in the NSIT conditions, as well as in Bell inequalities, the concept of observable is specified only at the level of its outcome statistics. From a quantum mechanical perspective, the most general way to associate a probability distribution to a measurement is through a so-called {\it Positive Operator-Valued Measure} (POVM), that generalizes the definition of observable as an Hermitian operator. The latter can be seen as a {\it Projector Valued Measure} (PVM). The notion of POVM provides nontrivial compatibility properties and a richer structure than that allowed by PVMs.
In the case of PVMs, in fact, all definitions of compatibility reduce to commutativity \cite{HW2010,BuschBook,BuschBook2016}, while this is not the case for general POVMs. In particular, it is known that the compatibility notion of quantum nondisturbance is weaker than commutativity, i.e., examples of POVMs that are noncommuting and yet nondisturbing have been found \cite{HW2010}. Thus, it makes sense to consider the most general quantum observables, namely POVMs, for a systematic analysis of NSIT constraints in quantum mechanics, and for further tightening of the clumsiness loophole.

In order to minimize the assumptions to be made in practical tests, it is important to separate the unavoidable experimental imperfections, i.e., the ``clumsiness'' associated with a specific experimental realization, from the fundamental limitations predicted by quantum mechanics for measurements of incompatible observables. Following this approach, we show how a systematic analysis of NSIT constraints from the perspective of the quantum notion of disturbance allows to tighten the clumsiness loophole, in the same spirit as Wilde and Mizel~\cite{WildeMizel2012} and George {\it et al.}~\cite{GeorgePNAS2013}. In particular, the assumption of {\it noninvasive measurements} can be weakened to a generalized notion of {\it noncolluding measurements}~\cite{WildeMizel2012}. This allows one to design more refined experimental tests of macrorealism for which an explanation in terms of ``clumsy'' measurements is harder to maintain.

Our starting point is the formal notion of {\it nondisturbance} for pairs of quantum observables introduced in Ref.~\cite{HW2010} (see also 
Refs.~\cite{BuschBook,BuschBook2016}). This notion can be shown to correspond to the NSIT condition for two measurements when optimization over all states and measurement implementations is performed. Following this analogy, we define nondisturbance conditions 
for arbitrary sequences of measurements and prove that if certain 
minimal subset of them is satisfied, then it is possible to satisfy macrorealism with all state preparations. After that, we prove the existence of certain compatibility structures such as existence of pairwise mutually nondisturbing POVMs, which become disturbing when measured in a longer sequence. On the one hand, these structures allow one to design stronger tests of macrorealism as anticipated above. On the other hand, we show that these structures must be taken into account in order to define a meaningful measure for violations of macrorealism. Finally, we give an example of such a measure and study the operations that do not increase it.

The paper is organized as follows. Sec.~\ref{mainsec:II} contains a short conceptual review of macrorealism and NSIT conditions (Sec.~\ref{sec:II}) as well as quantum incompatibility (Sec.~\ref{sec:III}). The main results are stated in Sec.~\ref{sec:IV}. Namely, we define nondisturbance conditions for arbitrary sequences of measurements and associate them with NSIT conditions. Moreover, we investigate the structural aspects of quantum disturbance (Sec.~\ref{sec:V}) and discuss how our analysis unifies and extends previous approaches to the clumsiness loophole (Sec.~\ref{sec:clumsloop}). In Sec.~\ref{sec:resourceperspective}, we discuss a possible resource-theoretical extension of our approach by defining a measure of quantum disturbance connected with the violation of macrorealism and investigating the operations that do not increase it.  Finally in Sec.~\ref{sec:conclusion}, we present the conclusions and the outlook for near-future developments.

\section{Preliminary notions}\label{mainsec:II}

\subsection{Macrorealism and No-Signalling In Time}\label{sec:II}

Macrorealist theories are defined by two assumptions: macrorealism {\it per se} (MRps) and noninvasive measurability (NIM). Such assumptions provide conditions on the measurement outcome probabilities in a sequential measurement scenario. They can be cast as
\begin{itemize}
\item (MRps): It is possible to assign a definite value to $Q(t)$ at any time $t$
\item (NIM): The value of $Q(t)$ can be measured with arbitrary small disturbance of its subsequent evolution.
\end{itemize}
The first condition implies that the distribution of $Q(t)$ at different instants of time is given by a classical variable, whereas the second one implies that we can find a nondisturbing measurement implementation, such that the probability distribution does not change as a consequence of our measurement procedure. 

Let us explain this with a simple example (depicted in Fig.~\ref{fig:Fig1}). Consider a single qubit in an initial state $\ket{\Psi}$ and a sequence of two projective measurements $Q(t_1)=\sigma_z$ and its time evolved version $Q(t_2)=\exp(i\pi \sigma_y) \sigma_z \exp(-i\pi \sigma_y) = \sigma_x$. Using the projection postulate, we can compute the joint probability of outcomes $(q_1,q_2)$ as
\begin{equation}
p(q_1,q_2) = \bra{\Psi} \Pi^{q_1}_{\sigma_z} \Pi^{q_2}_{\sigma_x} \Pi^{q_1}_{\sigma_z} \ket{\Psi} ,
\end{equation}
where we indicated with $\{\Pi^{q}_{\sigma_{z/x}}\}_{q=\pm 1}$ the projectors onto the eigenstates of
$\sigma_{z/x}$ respectively. On the other hand, for a similar sequence in which the first measurement is not performed we have 
\begin{equation}
p(q_2) = \bra{\Psi} \Pi^{q_2}_{\sigma_x} \ket{\Psi} .
\end{equation}
For the initial state $\ket{\Psi}=\ket{1}_z$ (an eigenstate of $\sigma_z$) we have $p(q_2)=\sum_{q_1} p(q_1,q_2)$, i.e., we cannot detect whether a measurement of $\sigma_z$ was performed and the outcome discarded before the measurement of $\sigma_x$. With this initial state macrorealism is satisfied. However, if we choose an eigenstate of $\sigma_x$ as the initial state, i.e., $\ket{\Psi}=\ket{1}_x=\tfrac 1 {\sqrt 2}(\ket{1}_z+\ket{-1}_z)$ we have $\sum_{q_1} p(q_1,q_2)= 1/2$ for both outcomes $q_2=\pm 1$ while $p(q_2)=\delta_{q_2 1}$.
Hence, for the second choice of initial state (and for the same sequence) macrorealism is not satisfied when using the projection postulate. 
Note also that, even with the first choice of initial state $\ket{\Psi}=\ket{1}_z$, one can observe the violation of macrorealism by choosing a different (e.g. noisy) implementation of the first measurement. Alternatively, using the projection postulate followed by a measure-and-prepare channel (preparing always the state $|1\rangle_x$ on every run) observations are consistent with macrorealism even when the initial state is $|1\rangle_x$. With this, we emphasize again that the notion of macrorealism is deeply connected with preparations as well as with measurement implementations.

\begin{figure}[t]
\includegraphics[width=0.45\textwidth]{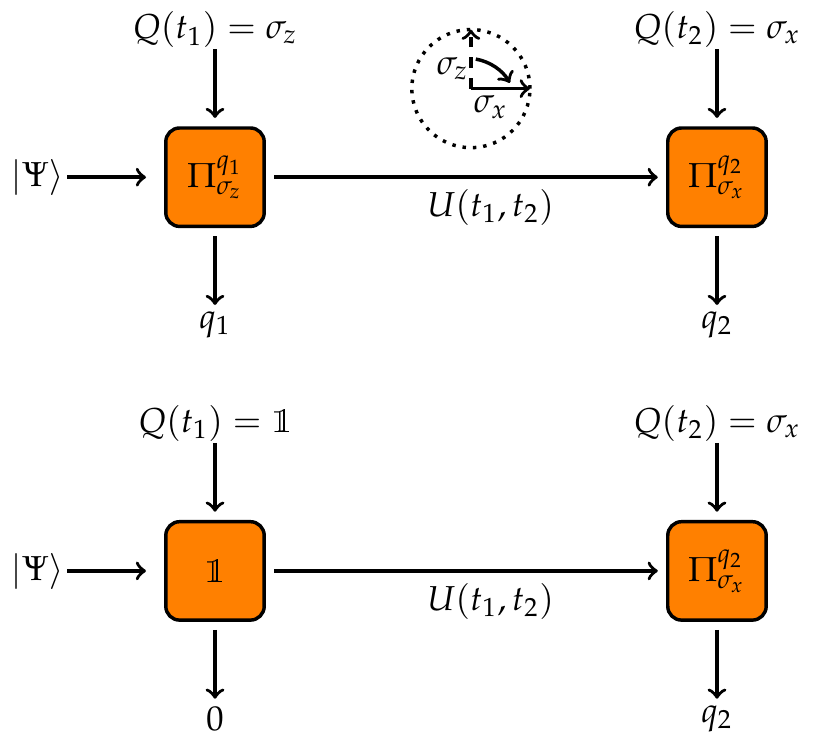}
\caption{Two simple sequences of two measurements that lead to a violation of macrorealism: From an initial state $\ket{\Psi}$ either a projective measurement of $\sigma_z$ or no measurement is performed, followed by a unitary rotation $\sigma_z \mapsto \sigma_x$ and a projective measurement of $\sigma_x$. For an initial superposition state $\ket{\Psi}=\tfrac 1 {\sqrt 2}(\ket{1}_z+\ket{-1}_z)$ violation of macrorealism can be observed.}\label{fig:Fig1}
\end{figure}

To simplify the subsequent discussion, let us fix some notation. We consider  probability distributions $p(q_1,\ldots,q_n|s_1,\ldots,s_n)$, where $q_i$ denotes the outcome at time $t_i$ and $s_i$ denotes the corresponding measurement setting.
In this work, we restrict to the case of two measurement settings. We label the event ``no measurement'' as $0$ (and a fixed outcome $q=0$) and the event ``measurement'' as $1$.
On the one hand, this simplifies the notation and the discussion of nondisturbance conditions. On the other hand, our construction can be straightforwardly generalized to the case of an arbitrary number of measurement settings. Note, moreover, that the case of $s=0,1$ is closest to the original one introduced by Leggett and Garg. We write the probability distribution compactly as $p({\bf q}_{\rm 1\rightarrow n}|{\bf s}_{\rm 1\rightarrow n})$, where ${\bf q}_{\rm 1\rightarrow n}=(q_1,\ldots,q_n)$ and ${\bf s}_{\rm 1\rightarrow n}=(s_1,\ldots,s_n)$.

Consider a variable $Q$, representing a (macroscopic) physical quantity evolving in time. The NIM assumption implies that the marginals of the probability distribution of a whole sequence of measurements are independent of the measurement setting associated with the outcome we sum over. For a sequence of just two measurements and two settings this means $\sum_{q_1}p(q_1,q_2|0,s_2)=p(0,q_2|0,s_2)=\sum_{q_1}p(q_1,q_2|1,s_2)$, where the first equality is just a rephrasing of the fact that for the setting $s_1=0$ there is only one fixed outcome $q_1=0$. 

For a sequence of three measurements MRps and NIM result in the following constraints:
\begin{subequations}\label{eqs:NSIT}
\begin{align}
\nonumber p(0,q_2,q_3|0,s_2,s_3)&=\sum_{q_1}p(q_1,q_2,q_3|1,s_2,s_3),\\
 & \text{ for }(s_2,s_3)\neq (0,0) \label{NSIT1} , \\
p(q_1,0,q_3|s_1,0,1)&=\sum_{q_2}p(q_1,q_2,q_3|s_1,1,1) \label{NSIT2} , 
\end{align}
\end{subequations} 
where again we associate only the outcome $0$ for setting $s_i=0$, i.e., ``no measurement''. As before, the above equalities imply that it is not possible to detect from the observed statistics whether a measurement has been performed and the outcome discarded at some point in the sequence. For example, Eq.~\eqref{NSIT2} demands the impossibility of detecting from the joint statistics at times $t_1$ and $t_3$ whether a measurement has been performed at $t_2$. Consequently, these conditions are called {\it no-signaling in time} (NSIT) conditions \cite{KoflerPRA2013}. 

Further constraints come from the time ordering and are called {\it arrow of time} (AoT) constraints \cite{ClementePRL2016}:
\begin{subequations}\label{eqs:AoT}
\begin{align}
p(q_1,q_2,0|s_1,s_2,0)&=\sum_{q_3}p(q_1,q_2,q_3|s_1,s_2,s_3)\label{AoT1} , \\
p(q_1,0,0|s_1,0,0)&=\sum_{q_2 }p(q_1,q_2,0|s_1,s_2,0)\label{AoT2} , 
\end{align}
\end{subequations} 
which correspond to constraints of no-signalling from the future to the past. These conditions, together with the positivity constraints $p(q_1,q_2,q_3|s_1,s_2,s_3)\geq 0$ define the so-called AoT polytope \cite{ClementePRL2016}. 

The intersection of (AoT) and (NSIT) gives the set of probabilities achievable in macrorealism (MR)~\cite{ClementePRL2016}. In the example above we can already observe that some of the (NSIT) conditions become redundant when we take into account also (AoT). 
\begin{rem}\label{rem:1} Assuming all of Eqs.~(\ref{eqs:AoT}), the set of Eqs.~(\ref{eqs:NSIT}) reduces to the following three independent conditions (in the two settings $s_i=0,1$ case):
\begin{subequations}\label{eqs:NSITT}
\begin{align}
p(0,q_2,q_3|0,1,1)&=\sum_{q_1}p(q_1,q_2,q_3|1,1,1)\label{NSIT11} ,\ \forall q_2,q_3, \\
p(q_1,0,q_3|1,0,1)&=\sum_{q_2}p(q_1,q_2,q_3|1,1,1)\label{NSIT22} ,\ \forall q_1,q_3, \\
p(0,0,q_3|0,0,1)&=\sum_{q_2}p(0,q_2,q_3|0,1,1)\label{NSIT23} , \ \forall q_3,
\end{align}
\end{subequations} 
\end{rem}
{\it Proof.---}Clearly the conditions in Eq.~(\ref{eqs:NSITT}) are a subset of the conditions in Eq.~(\ref{eqs:NSIT}). For the other direction, two cases remain to be checked. First, in Eq.~(\ref{NSIT1}) the case $(s_2,s_3)=(0,1)$ is implied by Eq.~(\ref{NSIT11}), Eq.~(\ref{NSIT22}), and Eq.~(\ref{NSIT23}). Second, in Eq.~(\ref{NSIT1}) the case $(s_2,s_3)=(1,0)$ is implied by Eq.~(\ref{AoT1}) and Eq.~(\ref{NSIT11}).\qed 

The NSIT and AoT conditions can be written for arbitrary sequences. The AoT conditions read
\begin{equation}\label{eq:aot_gen}
p({\bf q}_{\rm 1\rightarrow i}|{\bf s}_{\rm 1\rightarrow i})
=\sum_{q_{i+1}}p({\bf q}_{\rm 1\rightarrow i+1}|{\bf s}_{\rm 1\rightarrow i+1}) ,
\end{equation}
for all $i$, ${\bf q}_{\rm 1\rightarrow i}$ and ${\bf s}_{\rm 1\rightarrow i}$, where we also simplified the notation by not writing the zeros at the end of the sequences of $q$'s and $s$'s. The general NSIT conditions take the form
\begin{equation}\label{eq:nsit_gen}
\begin{split}
&p({\bf q}_{\rm 1\rightarrow i-1},0,{\bf q}_{\rm i+1\rightarrow n}|{\bf s}_{\rm 1\rightarrow i-1},0,{\bf s}_{\rm i+1\rightarrow n})
\\ 
&=\sum_{q_{i}}p({\bf q}_{\rm 1\rightarrow n}|{\bf s}_{\rm 1\rightarrow n}),
\end{split}
\end{equation}
for all $i$, ${\bf q}_{\rm 1\rightarrow n}$ and ${\bf s}_{\rm 1\rightarrow n}$ with $(s_{i+1},\ldots,s_n) \neq (0,\ldots,0)$. Again, the idea is that we cannot detect whether a measurement has been performed, and its outcome discarded, at some point in the measurement sequence.

The notions of NSIT and AoT need not to be restricted to the case of a single physical quantity $Q$ evolving in time. For example, we could have several different measurements at each instant of time, which are not necessarily the time evolved versions of the previous ones. This possibility is still consistent with the notion of macrorealism and noninvasive measurability for multiple physical quantities, as well as with the notion of signalling in time. Moreover, the quantum notion of disturbance is naturally defined in terms of multiple observables rather than only one observable evolving in time. Thus, in the following section we consider the general case, i.e., measurements not necessarily connected by time evolution, and translate the above conditions into a notion of nondisturbance for sequences of quantum measurements. 

\subsection{Quantum nondisturbance}\label{sec:III}

General quantum observables are described by positive-operator valued measures (POVM), i.e., collections of operators $\E=\{E_x\}_x$, such that $E_x\geq 0$ and $\sum_x E_x = \openone$, where $\openone$ denotes the identity operator. Each  $E_x$ is associated to an outcome $x$, and its probability for a state $\rho$ is given the Born rule 
${\rm Prob}(x)=\tr{\rho E_x}$. This gives the outcome probabilities, but not the state transformations after the measurement. In text books of quantum mechanics, state transformations are typically given through the L\"uders' rule~\cite{Luders1950}, i.e., $\rho \mapsto \sqrt{E_x} \rho \sqrt{E_x}$. Such mapping is a special case of a more general set of transformations called quantum instruments. A quantum instrument associated with a POVM $\E$ is a collection of maps $\{ \II_\E^x\}_x $ that satisfies
\begin{align}\label{eq:def_instr}
&\II_\E^x \text{ completely positive (CP)}\nonumber,\\
&\Lambda_{\II_\E}:= \sum_x \II_\E^x \text{ trace preserving (TP)}:\ \Lambda_{\II_\E}^*(\openone)=\openone \nonumber,\\
&(\II_\E^x)^*(\openone) = E_x,
\end{align}
where $(\II_\E^x)^*$ denotes the adjoint of $\II_\E^x$, which maps observables into observables (Heisenberg picture) and $\Lambda_{\II_\E}$ is the total channel corresponding to the sum of 
all instrument elements. Note that the correspondence between POVMs and instruments is one to many. For example, composing each $\II_\E^x$ with a quantum channel $\EE_x$ (CPTP map) 
results in $\EE_x \circ \II_\E^x$, which still satisfies Eq.~\eqref{eq:def_instr}. Interestingly, all instruments associated with the POVM $\E$ are of this form with the initial instrument 
given by the L\"uders instrument, i.e. $\II_{\E_L}^x(\varrho)=\sqrt{E_x}\varrho\sqrt{E_x}$ \cite{PellonpaaJPA2013}.

We can now recall the definition of quantum nondisturbance from Ref.~\cite{HW2010}. A POVM $\A=\{A_x\}_x$ is said to be {\it nondisturbing} with respect to a POVM $\B=\{B_y\}_y$, denoted by $\A \ND \B$, if there exists an instrument $\{\II^x_\A\}_x$ of $\A$ such that\
\begin{equation}\label{eq:def_ND2}
\sum_x \tr{\II_\A^x(\rho) B_y}=\tr{\rho B_y}\ \text{ for all } y \ \text{and } \rho .
\end{equation}
The above condition can be written equivalently in the Heisenberg picture as 
\begin{equation}
\Lambda_{\II_\A}^*(B_y)=B_y\ \text{ for all } y,
\end{equation}
where we have again used the notation $\Lambda_{\II_\A}^*:=\sum_x (\II_\A^x)^*$ for the total channel of $\{\II^x_\A\}_x$. In terms of probabilities, i.e., $p(y|\B)=\tr{\rho B_y}$ and $p(x,y|\A,\B)=\tr{\II_\A^x(\rho) B_y}$, one sees that the above constraint is identical to an NSIT condition, e.g., Eq.~\eqref{NSIT23}, for all possible state preparations $\rho$. If condition \eqref{eq:def_ND2} holds in both directions, i.e., $\A \ND \B$ and $\B \ND A$, we write $\A \NND \B$.

Nondisturbance comes with two special cases that we briefly review together with the basic results on the topic, see Refs.~\cite{HW2010,BuschBook,BuschBook2016}. We call a measurement $\A$ {\it of the first kind}, whenever it is nondisturbing with respect to itself, i.e., $\A \ND \A$. If, furthermore, there exists an instrument $\{\II^x_\A\}_x$ associated to $\A$ such that $(\II_\A^x)^*(A_y)= \delta_{xy} A_y$, where $\delta_{xy}$ is the Kronecker symbol, $\A$ is called {\it repeatable}. Repeatable measurements are characterised as POVMs with every POVM element having an eigenvalue 1. Separate sufficient conditions for the first kind property are commutativity and repeatability. Commutativity is also necessary in the qubit case and in the case of rank-1 measurements.

In the case of two nonequal POVMs commutativity is again sufficient for nondisturbance (and also necessary in the qubit case and in the case of rank-1 POVMs) \cite{HW2010}. However, in this case there are also other notions of compatibility that relate to nondisturbance. We recall here the notion of joint-measurability, together with the formal notion of commutativity, for reference in the following discussion:

(i) Two POVMs $\A=\{A_x\}_x$ and $\B=\{B_y\}_y$ are {\it commuting} whenever $[A_x, B_y]=0$ for all pairs of outcomes $(x,y)$. \\
(ii) Two POVMs $\A=\{A_x\}_x$ and $\B=\{B_y\}_y$ are {\it jointly measurable} whenever there exists a POVM $\Go=\{G_{xy}\}_{xy}$ 
such that $A_x=\sum_y G_{xy}$ for all $x$ and $B_y=\sum_x G_{xy}$ for all $y$.

Note that when at least one of the two POVMs is projective the above definitions of incompatibility are equivalent to each other and to nondisturbance. However, in the general case the concepts form a strict hierarchy: commutativity implies  nondisturbance, which, in turn, implies joint-measurability. To be more concrete, for commuting POVMs the L\"uders rule gives a nondisturbing implementation. Moreover, nondisturbance between $\A$ and $\B$ (in one direction) implies joint-measurability as any nondisturbing instrument implements a POVM, i.e. $\{(\II_\A^x)^*(B_y) \}_{xy} = \{G_{xy}\}_{xy}$, satisfying the requirements of a joint observable. The inverse implications are not true in general \cite{HW2010}.

In contrast to the other two notions, for nondisturbance the instruments are taken into account. As a consequence, deciding nondisturbance becomes somewhat more complicated, especially for longer sequences. In fact, efficient methods are available for the case of two measurements~\cite{HW2010} but only heuristic methods for longer sequences, see the discussion in Appendix~\ref{app:semidef}.

\section{Main results: Macrorealism in terms of quantum disturbance}\label{sec:IV}

\subsection{State-independent NSIT conditions from quantum nondisturbance}

Here, we present the formulation of a state-independent and clumsiness-free version of the NSIT conditions, given in terms of a generalization of quantum nondisturbance conditions.
For NSIT conditions, the simplest scenario is given by a sequence of two measurements $\Q_1$ and $\Q_2$, and two inputs $s_i=0$ and $1$ corresponding, respectively, to ``no measurement'' and ``measurement of $\Q_i$''. In this case we have only one NSIT condition, namely, $p(0,q_2|0,s_2)=\sum_{q_1}p(q_1,q_2|s_1, s_2)$. The above condition can be expressed in quantum theory by means of the Born rule as
\begin{equation}
\tr{\rho Q_2^{q_2}} =  \tr{\Lambda_{\II_{\Q_1}}(\rho) Q_2^{q_2}} \ \forall q_2,
\end{equation}
where  $\Lambda_{\II_{\Q_1}}=\sum_{q_1}\II_{\Q_1}^{q_1}$ is the total channel.
When optimizing over all states and instruments, the quantum expression of the NSIT condition becomes equivalent to the  nondisturbance condition $\Q_1 \ND \Q_2$.

For the case of three measurements in a sequence, the conditions on macrorealism can be mapped into three different NSIT conditions, i.e., Eqs.~(\ref{eqs:NSITT}), for probabilities $p(q_1,q_2,q_3|s_1,s_2,s_3)$. Since  we are interested in properties of observables, we optimize over all states. Moreover, in order to separate the clumsiness of a specific measurement implementation from the fundamental limitations on measurement disturbance predicted by quantum theory, we also perform an optimization over measurement implementations. The following observation connects the NSIT conditions with minimal nondisturbance requirements on quantum instruments (see Fig.~\ref{fig:Fig2}):

\begin{figure}[t]
\includegraphics[width=0.35\textwidth]{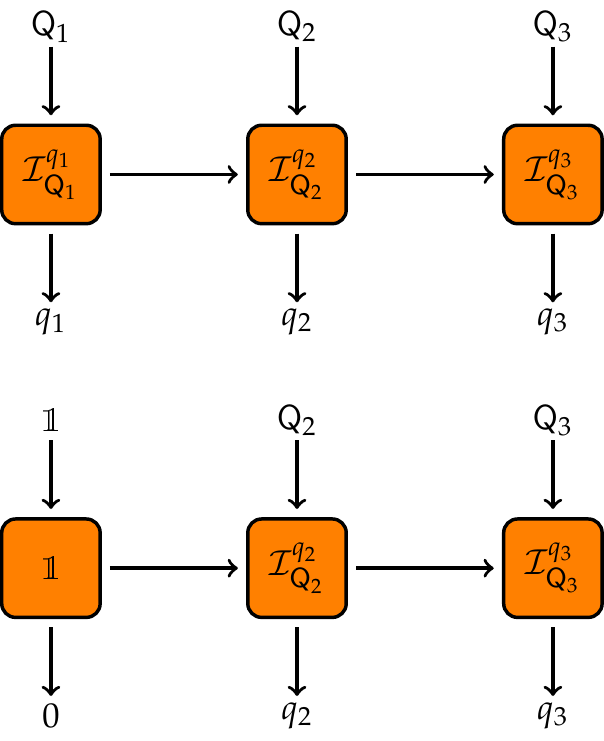}
\caption{A sequence of measurements which generalizes the example in Fig.~\ref{fig:Fig1}: General POVMs $(\Q_1,\Q_2,\Q_3)$ are considered instead of projective measurements; general quantum instruments $\II_{\Q_i}^{q_j}$ are considered instead of the L\"uders' rule; the two observables are not necessarily connected by time evolution. When optimization over initial preparations is performed, these two sequences form the minimal set of macrorealism conditions.}\label{fig:Fig2}
\end{figure}

\begin{obs}\label{obs:D_NS2} Consider a sequence of three POVMs labelled by $\Q_1, \Q_2$ and $\Q_3$. They allow for an implementation satisfying the NSIT conditions for all states if and only if there exists a pair of instruments $\II_{\Q_1}$ and $\II_{\Q_2}$ such that 
\begin{subequations}\label{eq:minimalNSIT}
\begin{align}
\tr{\II_{\Q_2}^y(\rho)Q_3^z}&=\sum_{x}\tr{\II_{\Q_2}^y\circ\II_{\Q_1}^x(\rho) Q_3^z} , \label{NSIT1a} \\
\tr{\rho Q_3^z}&=\sum_{y}\tr{\II_{\Q_2}^y(\rho)Q_3^z} , \label{NSIT4a}
\end{align}
\end{subequations}
for all $\rho$ and all $(y,z)$. More compactly, we write
\begin{equation}
\begin{split}
\left\lbrace
\begin{array}{l}
\Q_1 \ND \II^*_{\Q_2}\Q_3 , \\
\Q_2 \ND \Q_3 ,
\end{array}
\right.
\end{split}
\end{equation}
for some instrument $\II_{\Q_2}$ implementing $\Q_2$, which is the same in both sequences.
\end{obs}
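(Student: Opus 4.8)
The plan is to rewrite every sequence probability through the Born rule together with the instruments $\II_{\Q_1},\II_{\Q_2},\II_{\Q_3}$, and then match the result against the three independent macrorealism conditions of Remark~\ref{rem:1}. The first thing I would observe is that the instrument of the \emph{last} measurement never enters: since nothing follows $\Q_3$, only its POVM $\{Q_3^z\}_z$ is probed and the free data are the instruments $\II_{\Q_1}$ and $\II_{\Q_2}$. Explicitly one writes $p(q_1,q_2,q_3|1,1,1)=\tr{\II_{\Q_2}^{q_2}\circ\II_{\Q_1}^{q_1}(\rho)Q_3^{q_3}}$, $p(0,q_2,q_3|0,1,1)=\tr{\II_{\Q_2}^{q_2}(\rho)Q_3^{q_3}}$, $p(q_1,0,q_3|1,0,1)=\tr{\II_{\Q_1}^{q_1}(\rho)Q_3^{q_3}}$ and $p(0,0,q_3|0,0,1)=\tr{\rho Q_3^{q_3}}$. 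Next I would note that the (AoT) constraints~\eqref{eqs:AoT} hold automatically for any such quantum implementation, because $\sum_z Q_3^z=\openone$ and the total channel $\Lambda_{\II_{\Q_2}}$ is trace preserving; hence Remark~\ref{rem:1} applies and it suffices to control the three conditions~\eqref{NSIT11},~\eqref{NSIT22},~\eqref{NSIT23}. I would also record that $\II^*_{\Q_2}\Q_3:=\{(\II_{\Q_2}^y)^*(Q_3^z)\}_{yz}$ is a genuine POVM, since each element is positive by complete positivity of $\II_{\Q_2}^y$ and $\sum_{y,z}(\II_{\Q_2}^y)^*(Q_3^z)=\Lambda_{\II_{\Q_2}}^*(\openone)=\openone$, so that the expression $\Q_1\ND\II^*_{\Q_2}\Q_3$ is well defined.

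With this dictionary the two extreme conditions match the claimed nondisturbance relations verbatim. Substituting the quantum expressions, condition~\eqref{NSIT11} becomes exactly Eq.~\eqref{NSIT1a}; using the adjoint identity $\tr{\II_{\Q_2}^{y}\circ\II_{\Q_1}^{x}(\rho)Q_3^z}=\tr{\II_{\Q_1}^{x}(\rho)(\II_{\Q_2}^{y})^*(Q_3^z)}$ this is precisely the nondisturbance statement $\Q_1\ND\II^*_{\Q_2}\Q_3$ read against the POVM $\II^*_{\Q_2}\Q_3$. Likewise condition~\eqref{NSIT23} becomes Eq.~\eqref{NSIT4a}, i.e. $\Q_2\ND\Q_3$. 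This already settles the ``only if'' direction: any implementation satisfying all NSIT conditions for every state provides, through its instruments $\II_{\Q_1}$ and $\II_{\Q_2}$, a pair obeying~\eqref{NSIT1a} and~\eqref{NSIT4a}, with the same $\II_{\Q_2}$ entering both lines.

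For the ``if'' direction the only point still to verify is the middle condition~\eqref{NSIT22}, which in the quantum notation reads $\tr{\II_{\Q_1}^{q_1}(\rho)Q_3^{q_3}}=\sum_{q_2}\tr{\II_{\Q_2}^{q_2}\circ\II_{\Q_1}^{q_1}(\rho)Q_3^{q_3}}$. The key step, and the main subtlety of the whole argument, is to exploit that $\Q_2\ND\Q_3$ is a \emph{state-independent} statement: Eq.~\eqref{NSIT4a} holds for every input, so by linearity of $\II_{\Q_2}^{q_2}$ and of the trace it continues to hold for the subnormalized operator $\II_{\Q_1}^{q_1}(\rho)$ in place of $\rho$. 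Inserting $\sigma=\II_{\Q_1}^{q_1}(\rho)$ into $\sum_{q_2}\tr{\II_{\Q_2}^{q_2}(\sigma)Q_3^{q_3}}=\tr{\sigma Q_3^{q_3}}$ reproduces~\eqref{NSIT22} immediately, so the middle condition is automatically implied and no separate requirement on the pair $(\Q_1,\Q_2)$ is needed. Once this redundancy is recognized the proof reduces to the two bookkeeping identifications of the previous paragraph; the only thing one must watch is that the instrument $\II_{\Q_2}$ appearing in the post-measurement POVM $\II^*_{\Q_2}\Q_3$ of the first line is literally the same object whose nondisturbance with $\Q_3$ is asserted in the second line.
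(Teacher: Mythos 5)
Your proof is correct and follows essentially the same route as the paper's own argument: you identify conditions \eqref{NSIT11} and \eqref{NSIT23} with the two nondisturbance relations, and then show \eqref{NSIT22} is redundant by inserting the subnormalized post-measurement state $\II_{\Q_1}^{x}(\sigma)$ into the state-independent condition \eqref{NSIT4a}. The extra bookkeeping you supply (automatic validity of the AoT constraints under any quantum implementation, and the verification that $\II^*_{\Q_2}\Q_3$ is a genuine POVM) is left implicit in the paper but does not change the approach.
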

{\it Proof.---} Eqs.~(\ref{eq:minimalNSIT}) correspond to Eqs.~(\ref{NSIT11},\ref{NSIT23}) for a fixed state $\rho$ and measurements $\{\Q_i\}$ with instruments $\{\mathcal I_{\Q_i}^{x}\}$. It is sufficient to show that Eq.~(\ref{NSIT22}) becomes redundant when we require the other conditions to hold for all state preparations.
In fact, let us consider Eq.~(\ref{NSIT4a}) with a particular (unnormalized) state~\footnote{Note that the different normalization of the state does not change anything in the conditions above, since it gives the same factor for both sides of the equation.} $\rho=\II_{\Q_1}^x(\sigma)$ obtained from a certain $\sigma$ and a certain instrument $\II_{\Q_1}^{x}$. Then, we have that 
\begin{equation}
\begin{aligned}\label{eq:proofobs1}
\sum_y p(x,y,z|q_1,q_2,q_3)&=\sum_y \tr{\II_{\Q_2}^y\circ \II_{\Q_1}^x(\sigma)Q_3^z} \\
=\tr{\II_{\Q_1}^{x}(\sigma) Q_3^z}&=p(x,0,z|Q_1, 0, Q_3), 
\end{aligned}
\end{equation}
which corresponds precisely to Eq.~(\ref{NSIT22}) for a state $\sigma$ and  instrument $\II_{\Q_1}^{x}$. 
Thus, we have that if Eq.~(\ref{NSIT4a}) holds for all states $\rho$ and some instrument $\II_{\Q_2}^{y}$, then also  Eq.~(\ref{eq:proofobs1}) holds for the same $\II_{\Q_2}^{y}$ and all $\II_{\Q_1}^{x}(\sigma)$.
\qed

\begin{rem}\label{rem:2} 
For a sequence of POVMs $(\Q_1, \Q_2, \Q_3)$ with a fixed order, Eq.~\eqref{NSIT4a} and  Eq.~\eqref{NSIT1a} are independent conditions.
\end{rem}
{\it Proof.---} Choose a triple of POVMs such that $\Q_1$ disturbs $\Q_2$, e.g. noncommuting projective measurements, and take $\Q_3$ to be the coin flip POVM, i.e., $\Q_3=\{\frac{1}{2}\openone,\frac{1}{2}\openone\}$. Now $\Q_2 \ND \Q_3$ while $\Q_1$ disturbs the sequence $\II_{\Q_2}^* \Q_3=\frac{1}{2}\Q_2$. Similarly, one can choose  $\Q_1$ as the coin flip POVM, which always have a nondisturbing implementation, and $\Q_2$ that disturbs $\Q_3$.\qed

Note that for triples of POVMs with no fixed ordering the conditions remain minimal, see Obs.~\ref{obs:hollow_triang} below.  Obs.~\ref{obs:D_NS2} can be straightforwardly generalized to arbitrary sequences.
\begin{obs}\label{obs:2}
Let us consider a sequence of $n$ measurements described by POVMs $\Q_1,\ldots, \Q_n$. These POVMs allow an implementation satisfying the NSIT conditions in Eq.~\eqref{eq:nsit_gen} for any preparation if and only if:
\begin{equation}\label{eq:fullMRcondnpoint}
\begin{split}
\left\lbrace
\begin{array}{l}
\Q_1 \ND \II_{\Q_2}^*\II_{\Q_3}^*\ldots \Q_n\\
\Q_2 \ND  \II_{\Q_3}^*\II_{\Q_4}^*\ldots \Q_n\\
\ \ \vdots \qquad \vdots \\
\Q_{n-1} \ND \Q_n.
\end{array}
\right.
\end{split}
\end{equation}
\end{obs}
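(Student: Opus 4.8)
The plan is to prove both implications for one and the same fixed family of instruments $\II_{\Q_1},\dots,\II_{\Q_{n-1}}$, generalizing the mechanism already used in Observation~\ref{obs:D_NS2}. Two structural facts will do most of the work. First, in quantum theory the arrow-of-time relations~\eqref{eq:aot_gen} hold automatically: they follow from trace preservation, equivalently from $(\II_{\Q}^{q})^*(\openone)=Q^q$ together with $\sum_q Q^q=\openone$. Second, a statement of the form ``an NSIT condition holds for every state'' is equivalent to an operator identity in the Heisenberg picture, since $\tr{\rho X}=\tr{\rho Y}$ for all $\rho$ forces $X=Y$; I will use this to turn each NSIT equation into an identity between tail effect operators and back.

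Necessity is immediate. Each line $\Q_k \ND \II_{\Q_{k+1}}^*\cdots\Q_n$ of Eq.~\eqref{eq:fullMRcondnpoint} is literally one instance of Eq.~\eqref{eq:nsit_gen}: choose the settings $s_1=\dots=s_{k-1}=0$, $s_{k+1}=\dots=s_n=1$, and compare the marginal over $q_k$ taken with $s_k=1$ against the run with $s_k=0$. Hence any implementation that realizes all NSIT conditions realizes, with the very same instruments, the entire list~\eqref{eq:fullMRcondnpoint}.

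For sufficiency I would take an arbitrary instance of Eq.~\eqref{eq:nsit_gen} at some position $i$ and with an arbitrary setting pattern (satisfying $(s_{i+1},\dots,s_n)\neq 0$) and reduce it to the diagonal conditions in three steps. First, positions $1,\dots,i-1$ carry identical settings and identical outcomes on both sides, so they contribute one common outcome-conditioned operation; the claim therefore collapses to the corresponding operator ``tail identity'' for the sequence $\Q_i,\dots,\Q_n$, and it suffices to prove that stripped identity. Second, I would establish a coarse-graining lemma: the Heisenberg effect of the tail for an arbitrary pattern ${\bf s}_{i+1\to n}$ equals the marginal, over the unmeasured outcomes, of the fully measured tail effect in which every $\Q_{i+1},\dots,\Q_n$ is measured. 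This is shown by deleting the unmeasured positions one at a time from left to right: marginalizing $q_j$ replaces $(\II_{\Q_j}^{q_j})^*$ by the total channel $\Lambda_{\II_{\Q_j}}^*$, and because everything to its right is still in fully measured form at that stage, the diagonal condition $\Q_j \ND \II_{\Q_{j+1}}^*\cdots\Q_n$ deletes it; trailing unmeasured positions are removed first via the automatic AoT relation $(\II_{\Q}^{q})^*(\openone)=Q^q$. Third, I combine the lemma with the one remaining diagonal condition $\Q_i \ND \II_{\Q_{i+1}}^*\cdots\Q_n$: writing the partial tail effect as a sum of fully measured tail effects, pushing $\Lambda_{\II_{\Q_i}}^*$ through the sum, and applying that condition termwise shows that $\Q_i$ is nondisturbing for the coarse-grained tail POVM as well, which is exactly the stripped tail identity.

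I expect the main obstacle to be the coarse-graining lemma, and in particular the bookkeeping that the diagonal condition invoked at each deletion is actually applicable. The subtlety is that $\Q_j \ND \II_{\Q_{j+1}}^*\cdots\Q_n$ fixes only the fully measured tail effects, so the deletions must be ordered so that when position $j$ is removed everything to its right is still fully measured; processing left to right secures this, since already-deleted positions sit in the outer factor and pass unchanged through the linear maps. A secondary point requiring care is consistency of the instruments: the same $\II_{\Q_1},\dots,\II_{\Q_{n-1}}$ must appear in every condition, so that the ``if'' direction produces a single well-defined implementation rather than a different choice per NSIT equation.
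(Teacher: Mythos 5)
Your proposal is correct, and at the level of mechanism it runs on the same two facts as the paper's proof: AoT holds automatically in quantum theory, and each diagonal condition lets one delete a total channel $\Lambda_{\II_{\Q_j}}^*$ acting on a fully measured tail effect. The organization, however, differs. The paper argues by induction on the sequence length, with Obs.~\ref{obs:D_NS2} as the base case: NSIT conditions marginalizing at positions $\geq 2$ are inherited from the subsequence $\Q_2,\dots,\Q_n$ evaluated on the sub-normalized states $\II_{\Q_1}^{q_1}(\rho)$ --- which is exactly your front-stripping step read in the Schr\"odinger picture --- while conditions marginalizing at position $1$ with zeros in the tail are obtained by combining the first diagonal condition with the conditions $\Q_k \ND \II_{\Q_{k+1}}^*\cdots\Q_n$ and iterating. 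Your coarse-graining lemma is that iteration unrolled into a standalone operator identity; this buys a self-contained, non-recursive proof in which the deletion order and the role of AoT are made explicit, whereas the paper's induction is more compact but leaves the multiple-zero case terse. One slip in your bookkeeping: if the trailing unmeasured positions are removed first via AoT, then when an interior position $j$ is later deleted, the expression to its right is a tail truncated at the last measured position $m<n$, which is not literally among the effects fixed by $\Q_j \ND \II_{\Q_{j+1}}^*\cdots\Q_n$; you need the one-line remark that truncated tail effects are marginals (sums, via AoT) of fully measured tail effects and hence are also preserved by $\Lambda_{\II_{\Q_j}}^*$ by linearity --- or, alternatively, process all unmeasured positions strictly left to right and invoke AoT only at the final position. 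Either repair is immediate, so this is a presentational fix rather than a gap.
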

A detailed proof is presented in Appendix~\ref{App:B}. Note that the minimality argument in Remark~\ref{rem:2} (for a fixed ordering), constructed with a pair of disturbing observables and a trivial one, can be easily generalized to the case of $n$ measurements in Obs.~\ref{obs:2}, by using $n-2$ trivial observables and two disturbing ones. In fact, trivial observables never disturb (with a proper choice of instrument), and are never disturbed by, other observables. As a consequence, one obtains that Eq.~\eqref{eq:fullMRcondnpoint} also provides a minimal set of conditions.

Finally, we can extend the above conditions to the case in which all orders are possible:
\begin{obs}\label{obs:3}
Let $\II_{\Q_1},\ldots, \II_{\Q_n}$ be measurement implementations of the POVMs $\Q_1.\ldots,\Q_n$. The NSIT conditions are satisfied for all permutations of the sequence of instruments and for all state preparations, if and only if
\begin{equation}\label{NDorder}
\left\lbrace
\Q_{\pi(1)} \ND \II_{\Q_{\pi(2)}}^*\II_{\Q_{\pi(3)}}^*\ldots \Q_{\pi(n)}
\right\rbrace_{\pi}
\end{equation}
holds for all permutations $(\pi(1),\ldots,\pi(n))$. Note that we have slightly abused the notation here: by $A\ND B$ we mean that the fixed instrument is an optimal one, i.e. no optimisation over instruments is performed.
\end{obs}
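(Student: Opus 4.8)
The plan is to leverage Observation~\ref{obs:2} separately for each ordering and then reconcile the resulting collections of conditions. First I would note that, since the instruments $\II_{\Q_1},\dots,\II_{\Q_n}$ are now fixed once and for all (this is exactly the abuse of notation flagged in the statement), Observation~\ref{obs:2} applies verbatim to each permuted sequence $(\Q_{\pi(1)},\dots,\Q_{\pi(n)})$ and tells us that the full set of NSIT conditions~\eqref{eq:nsit_gen} for that ordering is equivalent to the chain
\[
\Q_{\pi(k)} \ND \II_{\Q_{\pi(k+1)}}^*\cdots\II_{\Q_{\pi(n-1)}}^*\Q_{\pi(n)},\qquad k=1,\dots,n-1.
\]
Hence ``NSIT for all permutations'' is equivalent to demanding every such chain condition, for every $\pi$ and every $k$. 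The top-level hypotheses in Eq.~\eqref{NDorder} are precisely the $k=1$ members of these chains, so the forward implication is immediate. The real work lies in showing that the $k=1$ conditions, taken over \emph{all} permutations, already force the shorter ($k\ge 2$) conditions as well.

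For the converse I would fix a target chain condition, i.e. a permutation $\pi$ and an index $k$, isolate the observables $\Q_{\pi(k)},\dots,\Q_{\pi(n)}$ appearing in it, together with the complementary block $\{\Q_{\pi(1)},\dots,\Q_{\pi(k-1)}\}$. I would then construct an auxiliary permutation $\sigma$ that lists the suffix $(\pi(k),\pi(k+1),\dots,\pi(n))$ first and the complementary block afterwards in any order, and apply the hypothesis~\eqref{NDorder} to $\sigma$. In the Heisenberg picture this reads $\Lambda_{\II_{\Q_{\pi(k)}}}^*(\hat B)=\hat B$ for every element $\hat B$ of the processed POVM $\II_{\Q_{\pi(k+1)}}^*\cdots\II_{\Q_{\pi(n)}}^*\II_{\Q_{\pi(1)}}^*\cdots\Q_{\pi(k-1)}$, where each such element carries outcome labels for all of $\pi(k+1),\dots,\pi(n)$ as well as for the trailing block $\pi(1),\dots,\pi(k-1)$.

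The key step -- and the only place where a genuine computation sits -- is a marginalization identity: summing such an element over the outcome labels of the trailing block collapses it onto the shorter processed element $\II_{\Q_{\pi(k+1)}}^*\cdots\II_{\Q_{\pi(n-1)}}^*\Q_{\pi(n)}$. This follows by telescoping from the innermost operator outward, using the normalization $\sum_{x}(\II_{\Q}^{x})^*(\openone)=\openone$ together with the defining property $(\II_{\Q}^{x})^*(\openone)=Q^{x}$ repeatedly: each summed trailing POVM collapses to $\openone$, which is then mapped by the next adjoint instrument to the corresponding POVM element and summed again, so that an identity is ultimately fed into $(\II_{\Q_{\pi(n)}}^{\cdot})^*$ and regenerates the bare terminal POVM $\Q_{\pi(n)}$. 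Since $\Lambda_{\II_{\Q_{\pi(k)}}}^*$ is linear, summing the equality $\Lambda_{\II_{\Q_{\pi(k)}}}^*(\hat B)=\hat B$ over the trailing labels yields exactly $\Lambda_{\II_{\Q_{\pi(k)}}}^*(\td B)=\td B$ for every element $\td B$ of the shorter observable, i.e. the desired chain condition $\Q_{\pi(k)}\ND\II_{\Q_{\pi(k+1)}}^*\cdots\Q_{\pi(n)}$.

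I expect the main obstacle to be bookkeeping rather than anything conceptual: verifying the telescoping identity carefully (being mindful that adjoint instruments compose as maps while the terminal POVM elements are operators), and checking that the construction of $\sigma$ covers every pair $(\pi,k)$, including the boundary cases $k=1$ (where no marginalization is needed and the condition is literally a hypothesis) and $k=n-1$ (where the entire complementary block is summed out to recover $\Q_{\pi(n-1)}\ND\Q_{\pi(n)}$). Once this identity is in place, the equivalence follows by combining it with Observation~\ref{obs:2} as above.
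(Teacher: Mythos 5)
Your proof is correct and follows essentially the same route as the paper: the auxiliary permutation $\sigma$ listing the suffix $(\pi(k),\ldots,\pi(n))$ first is exactly the paper's cyclic shift $(k,k+1,\ldots,n,1,\ldots,k-1)$, and your telescoping marginalization identity (using $\sum_x (\II_{\Q}^x)^*(\openone)=\openone$ and $(\II_{\Q}^x)^*(\openone)=Q^x$) is precisely what the paper invokes as ``applying the AoT conditions on the last part of the sequence,'' since AoT holds automatically for quantum instruments by trace preservation. The only difference is that you spell out the bookkeeping the paper leaves implicit.
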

A detailed proof is presented in Appendix~\ref{App:B}. As a final remark, notice that not all pairs of POVMs can be connected by time evolution. In fact, our nondisturbance conditions differ for the case of time-evolved observables, where macrorealism translates to slightly weaker constraints (see Appendix~\ref{App:A}).

\subsection{Quantum nondisturbance and the clumsiness loophole}\label{sec:clumsloop}

The question arises of how the notion of NSIT and quantum nondisturbance introduced above relates to other notions of 
weakly, or minimally, disturbing instruments and their role in tests of macrorealism.  Several approaches have been proposed such as {\it weak measurement} \cite{Dressel_rev_2014, WV_Review}, {\it ideal negative-choice measurement}, introduced already by Leggett and Garg~\cite{LeggettPRL1985}, {\it adroit measurements}, introduced by Wilde and Mizel~\cite{WildeMizel2012}, the notion of {\it nondisturbing measurement} by George {\it et al.}~\cite{GeorgePNAS2013},or some other forms of quantification of the disturbance with extra {\it control experiments} \cite{KneeNC2016,EmaryPRA2017}.
In the following, we compare our approach based on NSIT with different notions of ``minimally invasive'' measurements and show how our framework can extend previous approaches and stimulate the design of stronger experimental tests of nonmacrorealism.

Let us start with weak measurements and, more precisely, with a concrete model for them, that we call $\W(\Q)$. This is defined by means of an ancillary system, usually considered as continuous variable with ``position'' $q$, in a Gaussian state $|\phi(q)\rangle=\int{\rm d}q \  \phi(q) |q\rangle $ with $\phi(q)=(2\pi s)^{-1/4}\exp(-q^2/4s^2)$ being a Gaussian wave function with standard deviation $s$. Then, the canonical form of a POVM associated to the weak measurement of a (PVM) observable $\Q$ is given by \cite{WV_Review} (see also the discussion in \cite{Bednorz2012, EmaryRPP2014} for LG inequalities) $\W_q(\Q)= K_q^\dagger K_q$, with Kraus operator
\begin{equation}\label{eq:Kq_def}
K_q = (2\pi \sigma)^{-1/4}\exp(-(q-\Q)^2/4s^2) ,
\end{equation}
where $\Q=\sum_x x\ketbra{x}$ is an observable given by a Hermitian operator. Here, $s$ represents the ``weakness'' of the measurement, and in the limit $s \rightarrow \infty$ one obtains a weak measurement, while in the limit $s \rightarrow 0$ the measurement is ``strong'', i.e., it becomes the PVM $\Q$. Note also that the outcomes $q$ are referred to the ancillary system and are different from $x$. In fact, $q$ is usually a continuous variable whereas $x$ can be from a discrete set of outcomes.

The weak measurement scheme implements a POVM with elements that are functions of the Hermitian operator $\Q$. Therefore, measurement that do not disturb (or are not disturbed by) $\Q$, will do the same with $\W(\Q)$. 
More precisely, the POVM $\W(\Q)$ can be seen as just a classical postprocessing of $\Q$, i.e., $\W_q=\sum_x p(q|x) \ketbra x$, where $p(q|x)=(2\pi \sigma)^{-1/2}\exp(-(q-x)^2/2s^2)$ is a classical Gaussian probability distribution. In this sense, weak measurements such as $\W(\Q)$ do not seem to explore the rich structures of incompatibility relations among POVMs.
In this respect, see also the discussion in Sec.~\ref{sec:resourceperspective} about classical postprocessing as a free operation in a possible resource theory approach to violations of macrorealism.

Moreover, as discussed by Emary {\it et al.}~\cite{EmaryRPP2014} the strength or weakness of the measurement is not necessarily related to the noninvasivity, but rather to the ambiguity of the result. 
As a final remark, note that even when measurements are performed through a weak interaction between the system and the ancilla, such weakness is only justified from a quantum mechanical perspective and has, in principle, no meaning for a macrorealist, who would not accept a quantum mechanical description and require a definition of strength in terms of some explicitly experimental procedure \cite{EmaryRPP2014,WildeMizel2012}. 

In contrast, the notion of ideal negative-choice measurement (INCM) is fully defined in terms of macrorealist theories. Let us consider as an example the INCM by Knee et al.~\cite{KneeNC2012}. The authors performed an INCM by coupling the system to an ancilla via a (anti-)CNOT gate and measuring the ancilla. The main assumption of this INCM is that when the system is in the state $\ket{0}$ (or $\ket{1}$ for the case of anti-CNOT), the measurement procedure will not interact with the system. It is true that the state is unchanged for that particular preparation ($\ket{0}$ for the CNOT and $\ket{1}$ for the anti-CNOT), however, the measurement is not only disturbing from a quantum mechanical perspective, but it gives rise to explicit signaling in time for other preparations of the initial state,  a fact that a macrorealist could hardly ignore.

The proposal of Wilde and Mizel~\cite{WildeMizel2012} consists of the notion of {\it adroit measurements}, namely, measurements that are unable to violate a LGI when taken alone, but only when combined together. More precisely, in the usual scenario with three measurements $\Q_1,\Q_2,\Q_3$, the intermediate measurement is decomposed in four operations denoted as $\OO_1, \OO_2, \OO_3, \OO_4$. The decomposition is such that for any measurement sequence $\Q_1, \OO_i, \Q_3$ Eq.~\eqref{NSIT22} is satisfied, i.e., $\OO_i$ does not disturb the probability $p(q_1,q_3)$ or equivalently one can never violate the corresponding three-term LGI. The idea is to choose projective measurements with $Q_j,O_i$ denoting the corresponding Hermitian operators, such that $[O_i, Q_1]=0$ for odd $i$ and $[O_i,Q_3]=0$ for even $i$, and $\Q_2$ is given by the sequential measurement of $\OO_1,\OO_2,\OO_3,\OO_4$ with a summation over the first three outcomes. The conclusion of Wilde and Mizel is that each single measurement seems to be nondisturbing, in the sense of unable to violate a LGI, nevertheless their combination violate a LGI. The NIM assumption is then relaxed to the {\it noncolluding measurements} (NCM) assumption: seemingly nondisturbing measurements must necessarily ``collude'' to violate a LGI. However, even if each triple $\Q_1,\OO_i,\Q_3$ is unable to violate a LGI, it is very easy to show that $\OO_1$ and $\OO_3$ disturb $Q_3$ and $\Q_1$ disturbs $\OO_2,\OO_4$, i.e., they violate Eq.~\eqref{NSIT23}, since they are noncommuting projective measurements. 

In addition to the ideal case, Wilde and Mizel also extended their discussion to the case of $\varepsilon$-adroit measurements, where measurements show a $\varepsilon$-disturbance according to their measure, i.e., the single LGI tests of $\Q_1, \OO_i, \Q_3$ shows a violation $\varepsilon$. This quantification of the disturbance via extra {\it control experiments}, can then be used, under appropriate assumptions, to modify the classical bound for LG expressions. The proposal has been further extended by other authors. For example, in Knee et al.~\cite{KneeNC2016} the invasivity of the measurements appearing in the LG test is quantified through extra control experiments. There, in order to violate macrorealism, one no longer needs to assume noninvasive measurability, but rather a weaker condition, namely that the description of the systems utilizes only two classical states.

Many of the experimental proposals and tests of LGI and NSIT conditions involve qubit system, where the possible incompatibility structures are limited~\cite{HW2010}. George {\it et al.}~\cite{GeorgePNAS2013} proposed a scheme that uses qutrit observables. More precisely, they design a 3-term LGI based on the three-box paradox~\cite{Aharonov_1991}, for which the intermediate measurements seem to be ``nondisturbing'' w.r.t. the third for the state preparation used for the experiment. However, the observables they measure are noncommuting PVMs and thus there exist state preparations for which they violate the NSIT \eqref{eqs:NSITT}.

Our approach to NSIT and quantum nondisturbance allows us to generalize and treat in a systematic way this type of arguments. We wish to prove the existence of the following type situation: three measurement $\Q_1,\Q_2,\Q_3$ are given with fixed implementations. The implementations are such that any sequence of two measurements regardless of the measurement order is non-disturbing for all input states, i.e., all NSIT conditions are satisfied. The sequence $\Q_1\rightarrow \Q_2\rightarrow \Q_3$ is such that the NSIT conditions are violated for some input state. By providing measurements satisfying the above conditions, one can design experiments that further tighten the clumsiness loophole and consequently disprove a larger set of macrorealist theories. We are going to show in the following, Obs.~\ref{obs:hollow_triang}, that such a construction is possible.

In analogy with the case of joint measurability \cite{JM_struct}, we further conjecture that more complex structures are possible, namely, measurements that are nondisturbing for any possible subset of $n-1$ measurements, in any order and for any state, but violate an $n$-term NSIT when measured all together. The design and implementation of such measurements would allow one to substitute the assumption of noninvasive measurement with a hierarchy of weaker assumptions defined as follows. In analogy with Wilde and Mizel's notion, we define {\it $n$-term adroit measurements} as a set of measurements that satisfy the NSIT conditions for any initial state and for all possible sequence and ordering of at most $n$ measurements, then substitute the NIM assumption with the assumption
\begin{itemize}
\item ($n$-NCM) {\it $n$-term noncolluding measurements}:\\
A set of measurements that are  $(n-1)$-term adroit are also $n$-term adroit.
\end{itemize}

In summary, our systematic approach to the relation between quantum nondisturbance and macrorealism allows us to highlight the weaknesses of previous approaches, such as, e.g., the focus on particular state preparations~\cite{KneeNC2012,GeorgePNAS2013} or necessary but not sufficient conditions for macrorealism~\cite{WildeMizel2012},  and, at the same time, extend such approaches to further tighten the clumsiness loophole, by weakening the assumption of noninvasive measurements and allowing for the development of a hierarchy of notions of noncolluding measurements.

\subsection{Structure of nondisturbance relations}\label{sec:V}

Here we explore the richness of multiplewise quantum nondisturbance, and find examples of sequences of POVMs that reveal the existence of structures that can be also exploited to tighten the clumsiness loophole, as we explained above. The notion of nondisturbance 
can be considered in between commutativity and joint measurability. Commutativity is a binary relation of compatibility with the peculiary property, also known as Specker's principle \cite{Specker1960,Cabello_sp2012}, that mutually compatible pairs are also ``globally'' compatible. 
In contrast, JM allows all possible structures of (in)compatibility \cite{JM_struct}.

Hence, the question arises whether nondisturbance is a binary relation or not, at least whenever the notion is ``symmetrized'' between two observables, i.e, whether observables that are mutually nondisturbing $\A \NND B\NND C \dots$ are also nondisturbing in every possible (full) sequence $\A \ND \B \ND C \dots$ etc.
To answer this, at first it is helpful to show that mutual nondisturbance and commutativity are stricly distinct notions.
\begin{obs}\label{obs:2wNDcomm}
There exists observables $\A$ and $\A'$ such that they are nondisturbing in both directions, i.e., $\A \NND \A^\prime$ and yet non commuting.
\end{obs}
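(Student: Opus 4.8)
The plan is to prove the claim by exhibiting an explicit witnessing pair and verifying the two defining properties directly. First I would narrow the search using the structural facts already recorded in the excerpt: nondisturbance forces commutativity both in the qubit case and for rank-$1$ POVMs. Hence any pair realizing $\A \NND \A'$ with $[A_x,A'_y]\neq 0$ must live on a Hilbert space of dimension at least $3$ and must use effects of rank $\geq 2$. This immediately rules out the na\"ive candidates (noisy spin observables on a single qubit) and directs me to look among genuinely unsharp, higher-rank POVMs on $\complex^3$, in the spirit of the noncommuting nondisturbing example of \cite{HW2010}.

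To avoid checking the two directions independently, I would build the pair symmetrically. Choose an involutive unitary $U$ (so $U=U^\dagger$ and $U^2=\openone$) and set $\A'=U\A U$. Then any $\A$-instrument $\{\II_\A^x\}$ witnessing $\A \ND \A'$ can be conjugated, $\II_{\A'}^x(\rho):=U\,\II_\A^x(U\rho U)\,U$, to obtain a legitimate instrument for $\A'$: one checks $(\II_{\A'}^x)^*(\openone)=U\,(\II_\A^x)^*(\openone)\,U=U A_x U=A'_x$, so Eq.~\eqref{eq:def_instr} is preserved, and its total channel obeys $\Lambda_{\II_{\A'}}^*(A_x)=U\,\Lambda_{\II_\A}^*(U A_x U)\,U=U\,\Lambda_{\II_\A}^*(A'_x)\,U$. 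Since the hypothesis $\A\ND\A'$ gives $\Lambda_{\II_\A}^*(A'_x)=A'_x$, this collapses to $\Lambda_{\II_{\A'}}^*(A_x)=A_x$, i.e.\ $\A'\ND\A$ for free. Thus it suffices to produce a single nondisturbing instrument in one direction, together with one nonzero commutator.

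For the construction itself I would work in the Heisenberg picture condition $\Lambda_{\II_\A}^*(A'_y)=A'_y$. The idea is to find a channel $\Lambda$ whose fixed-point set $\{X:\Lambda^*(X)=X\}$ is a \emph{noncommutative} operator system that is invariant under conjugation by $U$, place all effects of $\A$ inside it (so that those of $\A'=U\A U$ lie there too and are automatically fixed), and realize $\Lambda$ as the total channel of an instrument measuring $\A$. A promising choice of ingredients is to take $\A$ to be a proper (non-projective) repeatable POVM on $\complex^3$, each effect having eigenvalue $1$, so that a non-L\"uders ``measure-and-restore'' instrument is available, and to pick $U$ rotating the relevant eigenbasis so that $[A_x,A'_y]\neq 0$. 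The verification then reduces to three routine checks: (i) the instrument elements are CP and the total map is TP with $(\II_\A^x)^*(\openone)=A_x$ as in Eq.~\eqref{eq:def_instr}; (ii) $\Lambda_{\II_\A}^*(A'_y)=A'_y$ for each $y$; and (iii) an explicit $3\times 3$ evaluation of one commutator $[A_x,A'_y]$ showing it is nonzero.

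The main obstacle is precisely reconciling (ii) with (iii). If one takes the obvious route of letting $\Lambda$ be a conditional expectation onto a block algebra, the rigidity of its Kraus freedom (any Kraus operator being a combination of the block projections) forces every effect $A_x$ compatible with that channel to be \emph{block-central}, hence commuting with everything fixed by $\Lambda$ — so the candidate collapses back to commutativity and the argument fails. The genuine work is therefore to engineer an instrument whose total channel fixes a noncommutative set of effects while still admitting a bona fide measurement of a noncommuting POVM, which is exactly why a higher-rank, non-projective $\A$ with a non-L\"uders instrument is indispensable, and where one must either adapt the explicit construction of \cite{HW2010} or carry out the $3\times 3$ computation by hand. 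Once a single such pair is secured, the symmetry reduction of the second paragraph supplies the reverse direction and the commutator of step (iii) establishes noncommutativity, completing the proof.
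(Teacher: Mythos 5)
Your symmetrization lemma is correct and even elegant: for an involution $U$ and $\A' = U\A U$, conjugating an $\A$-instrument by $U$ does produce a valid $\A'$-instrument whose total channel fixes $\A$ whenever the original one fixes $\A'$, so one-way nondisturbance would indeed give $\A \NND \A'$ for free. The problem is that this reduction is all you actually deliver. The statement is an existence claim, and your proposal never exhibits the witnessing pair: in your own closing paragraph you concede that "the genuine work" --- engineering an instrument of a noncommuting-with-its-partner POVM whose total channel fixes a noncommutative set of effects --- is left to "adapt the explicit construction of \cite{HW2010} or carry out the $3\times 3$ computation by hand." That engineering is precisely the content of the observation; without it there is no proof, only a plan plus a correct but peripheral symmetry trick. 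You even correctly diagnose why the naive route (conditional expectations onto block algebras) collapses back to commutativity, which shows the obstacle is real, but you do not overcome it.

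Moreover, the stage you choose, $\complex^3$ with a repeatable non-projective $\A$, is doubtful. Repeatability forces each effect to have an eigenvalue-$1$ eigenvector, and these eigenvectors are mutually orthogonal; on $\complex^3$ a two-outcome repeatable POVM is therefore of the form $A_1 = \ketbra{\psi_1} + b\,\ketbra{\phi}$, $A_2 = \ketbra{\psi_2} + (1-b)\,\ketbra{\phi}$ in a single orthonormal basis --- an extremely rigid, internally commutative structure --- and it is not known (neither from \cite{HW2010} nor from this paper) that any two-way nondisturbing noncommuting pair exists in dimension $3$. The paper's construction instead works in dimension $d\geq 5$ and sidesteps your fixed-point problem entirely: it takes the repeatable but \emph{internally noncommuting} POVM of \cite{HW2010}, $A_1 = P_1\oplus \tfrac12 R_1$, $A_2 = P_2 \oplus \tfrac12 R_2$, $A_3 = P_3 \oplus(\openone - \tfrac12 R_1 - \tfrac12 R_2)$ with $[R_1,R_2]\neq 0$, and defines $\A'$ as a \emph{coarse-graining} of $\A$ itself, $A'_1 = A_1$, $A'_2 = A_2+A_3$. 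Then the repeatable instrument of $\A$ fixes every $A_y$, hence every sum of them, giving $\A\ND\A'$; merging the instrument elements ($\II_{\A'}^1 = \II_\A^1$, $\II_{\A'}^2 = \II_\A^2 + \II_\A^3$) yields an $\A'$-instrument with the same total channel, giving $\A'\ND\A$; and noncommutativity is inherited from $[A_1,A_2]\neq 0$. In short, the key idea you are missing is that the second observable need not be a rotated copy of the first --- taking it to be a post-processing of a repeatable noncommuting POVM makes both directions of nondisturbance essentially automatic.
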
 
The proof consists in a specific example and can be found in Appendix~\ref{App:B}.

A natural question, then, arises of whether mutual nondisturbance can give rise to nontrivial structures for longer sequences of measurements, such as the so-called ``hollow triangle'' \cite{Specker1960, KunjwalPRA2014,Yu2013holl} (see Fig.~\ref{fig:Fig3}).
\begin{obs}\label{obs:hollow_triang}
There exists observables $\A, \B,\C$ such that $\A \NND \B$, $\A \NND \C$, and $\C \NND \B$, but in the sequence $\A \rightarrow (\B \rightarrow \C)$, $\A$ disturbs any nondisturbing implementation of the sequence $\B \rightarrow \C$.
\end{obs}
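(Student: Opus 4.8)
The statement is an existence claim, so the plan is ultimately to exhibit explicit POVMs; but I would first isolate the conceptual skeleton with a composition lemma that reduces the whole problem to a compatibility question. The lemma is: if the sequence $\A\to\B\to\C$ is nondisturbing — i.e.\ there exist instruments $\II_\B,\II_\A$ with $\Lambda_{\II_\B}^*(C_z)=C_z$ for all $z$ and $\Lambda_{\II_\A}^*(G_{yz})=G_{yz}$ for all $y,z$, where $G_{yz}:=(\II_\B^y)^*(C_z)$ — then $\A,\B,\C$ are triplewise jointly measurable. To see this I would compose the two witnessing instruments in the Heisenberg picture and set $H_{xyz}:=(\II_\A^x)^*\big((\II_\B^y)^*(C_z)\big)$. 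Using $(\II_\A^x)^*(\openone)=A_x$, the defining relation $\Lambda_{\II_\B}^*(C_z)=C_z$ of $\B\ND\C$, and $\Lambda_{\II_\A}^*(G_{yz})=G_{yz}$ of $\A\ND\II_\B^*\C$, a one-line marginalisation shows that $\{H_{xyz}\}$ is a POVM whose three marginals (summing out two indices at a time) are exactly $\A$, $\B$ and $\C$. Hence $\{H_{xyz}\}$ is a global joint observable of the triple.

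Taking the contrapositive, it suffices to produce three POVMs $\A,\B,\C$ that are pairwise mutually nondisturbing, $\A\NND\B$, $\A\NND\C$, $\C\NND\B$, but \emph{not} triplewise jointly measurable: then for every nondisturbing implementation $\II_\B$ of $\B\to\C$ the resulting $\II_\B^*\C$ must be disturbed by $\A$, which is precisely the claim (and, by symmetry of the lemma, explains why the other orderings fail as well, as befits a hollow triangle). A dimension count shows the example cannot live on a qubit: there nondisturbance coincides with commutativity, and pairwise commutativity forces joint commutativity by Specker's principle, hence triple joint measurability; the same obstruction rules out purely projective measurements. Therefore I would build the example in dimension at least three, taking as a seed the noncommuting yet mutually nondisturbing pair of Obs.~\ref{obs:2wNDcomm} and completing it to a triple — e.g.\ by a cyclically symmetric arrangement of three such pairs, or by adjoining a commuting pair $\B,\C$ to two genuinely (noncommutatively) nondisturbing edges $\A\NND\B$ and $\A\NND\C$.

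The main obstacle is the simultaneous tuning of two opposing requirements on the same operators. On one side, each ordered pair must admit an \emph{explicit} nondisturbing instrument; for a commuting edge this is the L\"uders instrument, but for the genuinely noncommutative edges one must write down a total channel $\Lambda_{\II_\A}$ that fixes all effects of $\B$ (resp.\ $\C$) in the Heisenberg picture without commuting with them, in the spirit of Obs.~\ref{obs:2wNDcomm}. On the other side, the triple must fail joint measurability, which I would certify either by an incompatibility witness (a contextuality- or Bell-type functional evaluated on $\A,\B,\C$) or by deriving a direct contradiction from the marginal constraints $\sum_z G_{yz}=B_y$ and $\sum_y G_{yz}=C_z$ together with positivity $G_{yz}\geq 0$. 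The delicate point is that mutual nondisturbance on all three edges already pushes the triple toward compatibility, since nondisturbance implies joint measurability pairwise; the operators must thus be chosen close enough to this boundary that every edge remains nondisturbing, yet far enough that no positive $\{H_{xyz}\}$ can reproduce all three marginals at once.
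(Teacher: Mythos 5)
Your composition lemma is correct: if $\A \rightarrow (\B \rightarrow \C)$ admits a nondisturbing implementation, then $H_{xyz} := (\II_\A^x)^*\bigl((\II_\B^y)^*(C_z)\bigr)$ is a POVM whose three marginals are $\A$, $\B$ and $\C$, so the triple is jointly measurable; the paper itself uses exactly this fact in passing, when it notes that the nondisturbing sequence $\B \rightarrow \A \rightarrow \C$ in its qutrit example implies joint measurability of the triple. The contrapositive reduction is therefore logically sound. The gap is that your entire proof then rests on exhibiting a triple that is pairwise mutually nondisturbing yet \emph{not} triplewise jointly measurable, and you never exhibit one -- you only list places where one might look. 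This is not a routine omission: it is the whole content of the existence claim, and it is a strictly stronger demand than what is actually needed. Indeed, both of the paper's examples \emph{are} triplewise jointly measurable: explicitly so for the qutrit triple, and also for the $d\geq 5$ example, where $\C=\A$ and $\B$ commutes with $\A$, so that $H_{xx'j}:=\delta_{xx'}A_xB_j$ is a joint observable. Consequently your sufficient criterion can never certify the paper's constructions, and whether your target object exists at all is left open -- your proposal stalls at precisely the ``delicate point'' you flag, namely that mutual nondisturbance pushes the triple toward compatibility.

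The paper's proof avoids non-joint-measurability altogether and uses a different mechanism: \emph{uniqueness} of the joint measurement. It takes $\C=\A$, where $\A$ is the noncommuting repeatable POVM of Obs.~\ref{obs:2wNDcomm} built on $\mathcal{H}_3\oplus\mathcal{H}_{d-3}$ with $d\geq 5$, and $\B=\{\openone\oplus 0,\,0\oplus\openone\}$ projective and commuting with $\A$. Pairwise mutual nondisturbance then holds by commutativity and repeatability. Because $\B$ is projective, Prop.~8 of Ref.~\cite{Heinosaari2008} guarantees that the L\"uders sequential POVM $E_{ij}=A_iB_j$ is the \emph{unique} joint observable of $\B$ and $\A$, hence every nondisturbing implementation of $\B\rightarrow\A$ yields this same $\E$, and only this single POVM has to be tested against $\A$. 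Prop.~3 of Ref.~\cite{HW2010} (if $E_{ij}^2\in\mathrm{span}\{E_{ij}\}$, then $\A\ND\E$ iff all elements commute), verified by choosing $R_1=(\openone+\sigma_x)/2$, $R_2=(\openone+\sigma_z)/2$, together with $[A_1,E_{22}]\neq 0$, shows that $\A$ disturbs $\E$. To rescue your route you would either have to prove existence of a pairwise mutually nondisturbing, non-triplewise-jointly-measurable triple (which nothing in the paper provides, and which may not exist), or fall back on a uniqueness argument of this kind, which replaces the quantifier ``for all nondisturbing implementations'' by a check on one explicit POVM.
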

The proof is rather technical and we omit it here, details can be found in Appendix~\ref{App:B}. 

As we anticipated in the previous subsection, the construction in Obs.~\ref{obs:hollow_triang} allows us to define a new notion of $2$-term adroit measurements. In this case, there exists an implementation of the observables (i.e., a quantum instrument), such that for each pair, in each order, and for each state preparation, the measurements are nondisturbing, but still able to violate a NSIT condition for the sequence  $\A \rightarrow (\B \rightarrow \C)$. This setup would allow for an experimental test of macrorealism where the NIM assumption is substituted with the weaker $3$-NCM discussed previously.
Moreover, our approach to NSIT and quantum nondisturbance allows us to treat the problem from a general perspective and systematically investigate possible structures of incompatibility, which may give rise to stronger experimental tests, i.e., based on weaker assumptions, of macrorealism.

\begin{figure}[t]
\includegraphics[width=0.5\textwidth]{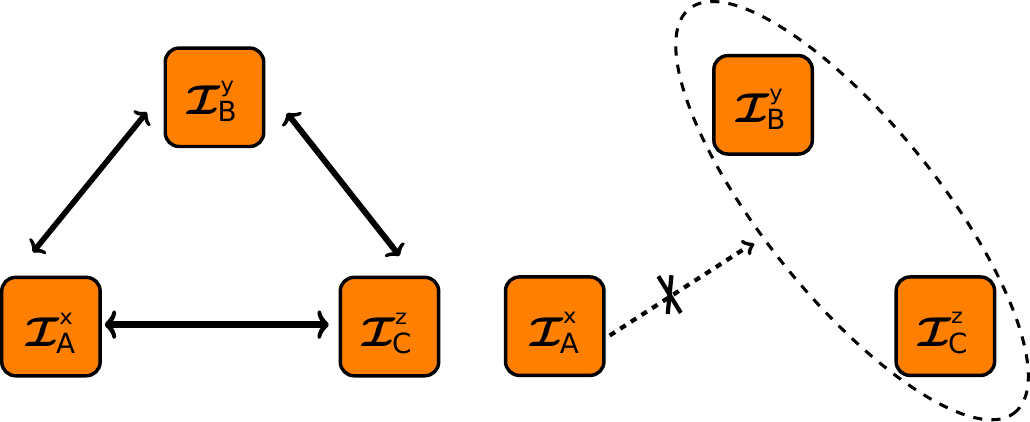}
\caption{Diagram of the hollow triangle of nondisturbance as in Obs.~\ref{obs:hollow_triang} even though there are instruments that allow nondisturbing implementations of every pairwise sequence, there is no triple of instruments $(\II_{\A}, \II_{\B}, \II_{\C})$ that allows a nondisturbing implementation for the sequence $\A \rightarrow (\B \rightarrow \C)$ or the sequence $\A \rightarrow (\C \rightarrow \B)$.}\label{fig:Fig3}
\end{figure}

\section{Perspectives on incompatibility quantification as intrinsic disturbance}\label{sec:resourceperspective}

It is known that notions of measurement incompatibility are related to different nonclassical properties of quantum systems such as nonlocality \cite{WolfPRL2009} and steering \cite{UoMoGu14, QuVeBr14, UoBu15} and consequently to the associated quantum information applications, such as, e.g., quantum key distribution~\cite{AcinQKD2007, BranciardQKD2012}. Different notions of measurement incompatibility are expected to play a role in the generation of nonclassical correlations for single systems. A notion of nonclassical correlations for single systems, quantum contextuality~\cite{Kochen1967,KlyachkoPRL2008,CabelloPRL2008,CSW2014}, has been suggested as a resource for quantum computation~\cite{Howard2014, RaussPRA2013,DelfossePRX2015,BermejoPRL2017,RaussPRA2017} and the a resource theory of quantum contextuality has been proposed~\cite{GrudkaPRL2014,AbramskyPRL2017,AmaralPRL2018}. Moreover, the connection between some notions of temporal correlations and information-theoretic tasks has been investigated via several different approaches~\cite{GallegoPRL2010,BrunnerPRL2013,GuehnePRA2014,BudroniPRL2014,DallarnoPRSA2017,DallarnoPRL2017,RossetPRX2018,BowlesPRA2015,TavakoliPRA2016,
AguilarPRL2018, Miklin2019, Hoffman2018, Spee2018}, but not in relation with the notion of incompatibility. 
Macrorealism violations have also been investigated in relation to resource theories of quantum coherence~\cite{Knee2018,Moreira2019}.

Following these approaches, we investigate a possible resource theory of macrorealism and its connection with the natural notion of measurement incompatibility in the temporal correlation scenario, namely, quantum disturbance.
We propose a possible quantification of intrinsic disturbance, based on the conditions in Eq.~\eqref{eq:fullMRcondnpoint}, and study the operations that do not increase it. We call them ``free processes'' or ``free operations'' following the terminology of resource theories (see, e.g., Ref.~\cite{ResTh2016}). Our measure can be seen as an extension to arbitrary sequences of the measure introduced in Ref.~\cite{HW2010} for pairs of measurements~\footnote{Notice, however, that in contrast to Ref.~\cite{HW2010} we sum over the disturbances on single POVM elements rather than picking the maximally disturbed element}.

A natural question arises of whether the notion of disturbance may be of interest from a quantum information perspective. In this sense, we believe that the term ``quantum disturbance'', even though is the standard term in the literature, is misleading as it indicates some sort of ``noise'' or ``destruction'' of the quantum information. In contrast, the ability to ``disturb'' subsequent measurements must be interpreted as the possibility of encoding information into the system, which is then decodable by subsequent measurements. Consider the example of a quantum information protocol, e.g., computation, performed via a sequence of measurements. Intuitively, the case in which only macrorealist correlations arise can be efficiently simulable since such correlations are ``fixed'' at the beginning of the protocol and do not depend on the act of measuring some particular observable.  Any classical simulation of this situation would involve additional memory that must be updated after each measurement to keep track of the (detectable) changes to the system, in contrast to usual hidden variable model description where the outcomes of all possible measurements are decided at the beginning of each experimental run.

Given the hierarchical structure of nondisturbance conditions, let us start from the simplest scenario. For two POVMs $\A$ and $\B$ and fixed instrument $\II_\A$ we define
\begin{equation}\label{eq:DinstTwo}
D_{\A}(\B, \II_\A):=  \left[ \sum_y \Big\| \Lambda_{\II_\A}^*(B_y)- B_y \Big\| \right] ,
\end{equation}
where $\| X \| = \sup_\rho |\tr{\rho X}|$ is a matrix norm, i.e., the maximal eigenvalue (in absolute value) of a Hermitian operator $X$, and $\Lambda_{\II_\A}^*:=(\sum_x\II_\A^x)^*$. The corresponding \textit{disturbance measure} can be written as the infimum over all instruments implementing $\A$ as
\begin{equation}\label{eq:DAB_def}
D_{\A}(\B):= \inf_{\II_{\A}} D_{\A}(\B, \II_\A) .
\end{equation} 

For the case of finite dimensional Hilbert spaces, and analogously to Ref.~\cite{HW2010}, such a measure can be cast as a semidefinite program (see Appendix~\ref{App:C}). The same expression has been used as a macrorealist quantifier for projective instruments in Ref.~\cite{SchildPRA2015}.
In a sequence of two measurements $\Q_1$ and $\Q_2$ with no fixed order, one can define a symmetrized version of the measure, i.e., we define a \textit{measure of violation of macrorealism} as
\begin{equation}\label{eq:def_MR3}
\begin{aligned}
{\rm MR}(\Q_1,\Q_2)&:= D_{\Q_1}(\Q_2) + D_{\Q_2}(\Q_1) .
\end{aligned}
\end{equation}

Instead of explicitly writing the measure of disturbance and the measure of violation of macrorealism for arbitrary sequences, we focus only on the case of three measurements, as the idea of how to generalize the measure to longer sequences becomes evident already from this case. For this scenario, macrorealism is associated to the independent (see Remark \ref{rem:2} above) conditions in Eq.~(\ref{NSIT1a}) and Eq.~\eqref{NSIT4a}.
Hence, we write the measure of violation of macrorealism as
\begin{equation}\label{Macromeas3}
\begin{split}
{\rm MR}(\{\Q_i\}_{i=1}^3):=\sum_{{\rm perm.}~\pi } \inf_{\{\II_{\Q_i}\}_i}  
 \left[  D_{\Q_{\pi(2)}}(\Q_{\pi(3)}, \II_{\Q_{\pi(2)}})   \right. \\
\left. + D_{\Q_{\pi(1)}}(\Q_{\pi(2)}, \Q_{\pi(3)}, \II_{\Q_{\pi(1)}}, \II_{\Q_{\pi(2)}} ) \right],
\end{split}
\end{equation}
where the infimum is taken over the instruments implementing the corresponding POVMs, the sum is over all possible ordering of measurements, i.e., permutations of $(1,2,3)$, and we have denoted
\begin{equation}
D_{\A}(\B, \C, \II_{\A},\II_{\B}):=  \sum_{yz} \Big\|\Lambda_{\II_\A}^*((\II_\B^*\C)_{yz})- (\II_\B^*\C)_{yz} \Big\| ,
\end{equation}
where $\Lambda_{\II_\A}^*:=(\sum_x\II_\A^x)^*$ and $(\II_\B^*\C)_{yz}:=(\II_\B^y)^*(C^z)$.

The definition in Eq.~\eqref{Macromeas3} can be straightforwardly generalized to arbitrary sequences, by using conditions in Eq.~\eqref{eq:fullMRcondnpoint}.

In order to build a resource theory associated to the intrinsic disturbance or the violation of macrorealism one needs to define the free resources and free operations. Free resources are naturally identified with sequences of POVMs that are nondisturbing according to  the measure in Eq.~\eqref{Macromeas3} (and its generalizations to arbitrary sequences), i.e., they always admit a physical implementation that does not provide violation of macrorealism for any possible state preparation. Free operations, instead, are operations on the POVMs that do not increase the measure ${\rm MR}(\Q_1,\dots,\Q_n)$. We have
\begin{obs}\label{obs:free_op}
The following transformations of POVMs are free operations:
\begin{itemize}
\item[$(i)$] global classical postprocessing: $\A \mapsto \tilde{\A}$ with $\tilde A_{a}=\sum_{a'}p(a|a')A_{a'}$ and $p(\cdot|a')$ a probability distribution for every $a'$, namely, ${\rm MR}(\{\td \Q_i\}_{i=1}^n)\leq {\rm MR}(\{\Q_i\}_{i=1}^n)$;
\item[$(ii)$] global invertible quantum preprocessing: $\A \mapsto \tilde{\A}$ with $\tilde A_{a} = \UU(A_{a})$, with $\UU$ unitary channel, namely ${\rm MR}(\{\td \Q_i\}_{i=1}^n)\leq {\rm MR}(\{\Q_i\}_{i=1}^n)$;
\item[$(iii)$] global quantum preprocessing for qubit channels: $\A \mapsto \tilde{\A}$ with $\tilde A_{a} = \Lambda(A_{a})$, with $\Lambda$  a qubit channel, namely ${\rm MR}(\{\td \Q_i\}_{i=1}^n)\leq {\rm MR}(\{\Q_i\}_{i=1}^n)$;;
\item[$(iv)$] local preprocessing via a depolarazing channel: $\A\mapsto \td \A$ with $\tilde{A}_a = \alpha A_a + \tfrac{1-\alpha} d \tr{A_a} \openone$ for $\alpha\in[0,1]$, namely ${{\rm MR}( \Q_1,\ldots, \td \Q_i,\ldots, , \Q_n) \leq {\rm MR}(\Q_1,\ldots, \Q_i,\ldots  \Q_n)}$.
\end{itemize} 
\end{obs}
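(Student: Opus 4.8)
The plan is to prove each item by exhibiting, for the transformed POVMs $\{\td\Q_i\}$, an explicit family of instruments built from a near-optimal family for the original $\{\Q_i\}$, and then checking term by term that none of the disturbances entering \eqref{Macromeas3} grows. Since in \eqref{Macromeas3} the infimum over instruments and the sum over permutations are taken after the POVMs are fixed, it suffices to produce, for each permutation $\pi$, instruments $\{\II_{\td\Q_i}\}$ whose value of the bracket is at most the one attained by the chosen $\{\II_{\Q_i}\}$; taking infima then gives ${\rm MR}(\{\td\Q_i\})\le {\rm MR}(\{\Q_i\})$. Throughout I would use three elementary facts about the norm $\no{X}=\sup_\rho|\tr{\rho X}|$ appearing in \eqref{eq:DinstTwo}: it is unitarily invariant; it is contractive under any unital positive map, in particular under every Heisenberg total channel $\Lambda^{*}_{\II}$ and under $\Lambda$ itself; and the disturbance of a multiple of $\openone$ vanishes, since $\Lambda^{*}_{\II}(\openone)=\openone$.

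For $(i)$, given instruments $\II_\Q^{x}$ I set $\II_{\td\Q}^{a}=\sum_{x}p(a|x)\,\II_{\Q}^{x}$ (each POVM carrying its own stochastic matrix). This is a valid instrument for $\td\Q$ and, decisively, has the \emph{same} total channel $\Lambda_{\II_{\td\Q}}=\Lambda_{\II_{\Q}}$, so the disturbing maps are untouched. Every disturbed object is then a stochastic mixture of the originals, e.g. $\td Q^{z}=\sum_{z'}p(z|z')Q^{z'}$, and the triangle inequality followed by $\sum_a p(a|a')=1$ collapses the sum back to the original value. For $(ii)$ I conjugate every instrument, $\II_{\td\Q}^{x}=\UU\circ\II_{\Q}^{x}\circ\UU^{*}$, so that $(\II_{\td\Q}^x)^*(\openone)=\UU(Q^x)=\td Q^x$ and $\Lambda^*_{\II_{\td\Q}}=\UU\circ\Lambda^*_{\II_{\Q}}\circ\UU^{*}$. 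Each operator appearing in a disturbance becomes the $U$-conjugate of the corresponding original one, and unitary invariance of $\no{\cdot}$ gives equality of every term; thus ${\rm MR}$ is in fact invariant, as it should be for an invertible operation.

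For $(iii)$ I would reduce to $(ii)$ plus convexity. Because $\td A_a=\Lambda(A_a)$ has to be a POVM, $\Lambda$ is necessarily \emph{unital}, and the decisive use of $d=2$ is the Birkhoff-type theorem that every unital qubit channel is a mixture of unitary channels, $\Lambda=\sum_k p_k\,\UU_k$ (a property that fails for $d\ge 3$). Writing $\Q_i^{(k)}=\UU_k(\Q_i)$ gives $\td\Q_i=\sum_k p_k\,\Q_i^{(k)}$ with a \emph{common} mixing index for all $i$, and ${\rm MR}(\{\Q_i^{(k)}\})={\rm MR}(\{\Q_i\})$ by $(ii)$; the claim then follows from joint convexity of ${\rm MR}$ under this correlated classical mixing. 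Item $(iv)$ fits the same template: $\td A_a=\alpha A_a+(1-\alpha)\tfrac{\tr{A_a}}{d}\openone$ is the correlated mixture, on the single index $i$, of $\Q_i$ with the trivial POVM whose effects are proportional to $\openone$; the trivial POVM admits the ``do nothing'' implementation (total channel the identity), which neither disturbs nor is disturbed, so that branch leaves only the lower-order disturbances of the remaining measurements, and convexity again closes the estimate.

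The main obstacle is exactly this convexity, and it is the sequential composite terms $D_{\A}(\B,\C,\II_\A,\II_\B)$ of \eqref{Macromeas3} that resist it. Mixing the POVMs forces the total disturbing channel to become the same unitary (or depolarizing) mixture, and when this acts on a mixture-transformed effect $\II_\B^{*}\C$ one generates cross terms $\Lambda^{*}_{\II_{\Q^{(k)}}}\!\big((\II_{\Q^{(k')}}^{y})^{*}C^{z}\big)$ with $k\neq k'$, which are \emph{not} controlled by the matched $k=k'$ disturbances; a naive convex combination of instruments produces precisely these independent-label cross terms and loses the bound, typically by a factor two when one reroutes through $\no{(\id-\Lambda^{*})W}\le 2\no{W}$. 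The way I would try to close this is to exploit that the infimum sits \emph{inside} the permutation sum in \eqref{Macromeas3}: for each fixed $\pi$ I would tailor the branch structure of $\II_{\td\Q_i}$ to the role $\Q_i$ plays (disturber, intermediate, or disturbed), and use the dual semidefinite formulation of Appendix~\ref{App:C} to build a feasible point in which the contraction and the annihilation of trivial effects are manifest without ever re-mixing the label. Verifying that these role-adapted feasible points beat the original value uniformly over $\pi$ is the crux of the whole observation.
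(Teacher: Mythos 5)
Your items (i) and (ii) are correct and essentially reproduce the paper's own argument: for classical postprocessing the paper also takes $\II_{\td\Q}^{a}=\sum_{a'}p(a|a')\,\II_{\Q}^{a'}$ and applies the triangle inequality, handling the disturbing (first) slot by the fact that every total channel of an instrument of $\Q$ is also the total channel of an instrument of $\td\Q$ --- which is exactly the content of your observation $\Lambda_{\II_{\td\Q}}=\Lambda_{\II_{\Q}}$; for unitary preprocessing the paper composes instruments with $\EE$ and $\EE^{-1}$, which is your conjugation argument in different clothes.

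The genuine gap is in (iii) and (iv), and it is the one you flag yourself: the reduction to correlated mixtures plus ``joint convexity of ${\rm MR}$'' fails because of the $k\neq k'$ cross terms, and the repair you sketch (role-adapted feasible points via the SDP dual) is never carried out, so these two items remain unproven. The paper needs no convexity at either point. For (iv) the missing idea is a specific instrument choice: since only unitality of $\Lambda_D^*$ is used, $\II_{\td\Q_2}^{y}=\alpha\,\II_{\Q_2}^{y}+\tfrac{1-\alpha}{d}\tr{Q_2^y}\,\Lambda_D$ is a valid instrument for $\td\Q_2$ with an \emph{arbitrary} channel $\Lambda_D$, and when the depolarized measurement sits in an intermediate slot of a composite disturbance the paper takes $\Lambda_D=\Lambda_{\II_{\Q_2}}$, the total channel of the \emph{original} instrument, rather than your ``do nothing'' choice $\Lambda_D=\id$. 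With $\Lambda_D=\id$ the trivial branch leaves over $(1-\alpha)\sum_z\big\|(\Lambda_{\II_{\Q_1}}^*-\id)(Q_3^z)\big\|$, the \emph{direct} disturbance of $\Q_1$ on $\Q_3$; this quantity does not appear in ${\rm MR}(\Q_1,\Q_2,\Q_3)$ (cf.\ Remark~\ref{rem:2}) and is not controlled by the composite disturbance, because coarse-graining $\II_{\Q_2}^{*}\Q_3$ over $y$ produces $\Lambda_{\II_{\Q_2}}^{*}(Q_3^z)$, not $Q_3^z$. With $\Lambda_D=\Lambda_{\II_{\Q_2}}$ the leftover is instead $(1-\alpha)\sum_z\big\|(\Lambda_{\II_{\Q_1}}^*-\id)\Lambda_{\II_{\Q_2}}^{*}(Q_3^z)\big\|$, which the triangle inequality over $y$, Eq.~\eqref{eq:ine_tot_inst}, bounds by $(1-\alpha)\,D_{\Q_1}(\Q_2,\Q_3,\II_{\Q_1},\II_{\Q_2})$: all labels stay matched and no convexity of the measure is ever invoked.

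For (iii) the paper's route has nothing to do with Birkhoff-plus-convexity (your decomposition of unital qubit channels into mixtures of unitaries is a correct fact, but the convexity it would require is precisely what fails). The paper instead observes that a channel mapping POVMs to POVMs is necessarily unital, that unital qubit channels preserve commutativity of Hermitian operators \cite{HuPRA2012}, and that for qubits nondisturbance is equivalent to commutativity of the POVM elements \cite{HW2010}; hence nondisturbing (free) sequences are mapped to nondisturbing sequences. Strictly speaking this secures preservation of the free set rather than the quantitative inequality stated in the observation, but it sidesteps the cross-term problem entirely, which your proposal does not.
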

See Appendix~\ref{App:C} for the proof. Notice that classical postprocessing is precisely the operation defining the weak measurement in Eq.~\eqref{eq:Kq_def}. It is not clear whether global quantum preprocessing is, in general, a free operation. For the specific case of the depolarizing channel, we were able to prove that even its local application is a free operation.

\section{Conclusion and outlook}\label{sec:conclusion}
In summary, we investigated a systematic approach to the notion of quantum nondisturbance and its relation to the NSIT conditions of macrorealism. We first derived a minimal set of conditions able to guarantee NSIT for any state preparation. Then, we discussed how such systematic approach to quantum nondisturbance conditions, together with structural results we derived, could be used to tighten the clumsiness loophole and go beyond the limitations of previous approaches. Finally, we discussed a possible resource theory of quantum nondisturbance and macrorealism, by defining a measure of disturbance and investigating which operations do not increase it.

A natural question arises of why one should care about the relation between incompatibility properties and notions of nonclassical temporal correlations such as nonmacrorealism. First, let us recall that 
in the spatial scenario, quantum incompatibility among observables has been proven to be at the basis of quantum phenomena of nonlocality \cite{WolfPRL2009} and steering \cite{UoMoGu14,QuVeBr14,UoBu15}, which, in turn, are at the basis of quantum information theoretic tasks such as quantum key distribution \cite{AcinQKD2007,BranciardQKD2012}. The temporal analogue, however, is less explored, from the perspective of notions of incompatibility, nonclassical correlations, and quantum information theoretic tasks. A notion of nonclassical correlations for single systems, namely quantum contextuality \cite{Kochen1967,KlyachkoPRL2008,CabelloPRL2008,CSW2014}, has been recently suggested as a resource for quantum computation \cite{Howard2014, RaussPRA2013,DelfossePRX2015,BermejoPRL2017,RaussPRA2017} and investigated from the perspective of resource theories \cite{GrudkaPRL2014,AbramskyPRL2017,AmaralPRL2018} and the perspective of memory cost for its simulation \cite{KleinmannNJP2011,Fagundes2017}. Moreover, the connection between temporal correlations and information-theoretic tasks has been investigated via several different approaches
\cite{GallegoPRL2010,BrunnerPRL2013,GuehnePRA2014,BudroniPRL2014,DallarnoPRSA2017,DallarnoPRL2017,
RossetPRX2018,BowlesPRA2015,TavakoliPRA2016, AguilarPRL2018, Hoffman2018, Spee2018}, but never in relation with the notion of incompatibility. 

Starting from our results, several other lines of research can naturally be pursued. For instance, we showed that all but one of the conditions in Obs.~\ref{obs:2} can (trivially)  be satisfied, impliying that one can find sequences of measurements such that all sequences but one are nondisturbing, when the order of measurements is fixed. With the construction in Obs.~\ref{obs:hollow_triang} we showed that this result can be extended to arbitrary orders for $n=3$. Can this result be extended to arbitrarily long sequences? Do the most general incompatibility structures exist, as it is the case for joint measurability \cite{JM_struct}, or are there some limitations?  Finally, can stronger tests of macrorealism be designed via such constructions and the $n$-NCM assumption? 

In parallel, we would like to further investigate the connection between disturbance and quantum information tasks. Can our resource theory be connected to other resource theories of nonclassical correlations? How is the notion of disturbance related to the notion of memory cost of simulating sequential measurements? 

We hope our work will stimulate the investigation of the connection between macrorealism, quantum incompatibility, and quantum information tasks in the temporal scenario, and create a common basis for discussions among the corresponding research communities.

\acknowledgements
The authors thank Marco T\'ulio Quintino, Teiko Heinosaari, and Tom Bullock for stimulating discussions. This work has been supported by the the Austrian Science Fund (FWF): M 2107-N27 (Lise-Meitner Programm), M 2462-N27 (Lise-Meitner Programm),  START project
Y879-N27,  ZK 3 (Zukunftskolleg),  the Finnish Cultural Foundation, and the ERC (Consolidator Grant No. 683107/TempoQ).

\appendix

\section{Technical proofs of Obs.3-5}\label{App:B}
{\it Proof of Obs.\ref{obs:2}.---} One direction is trivial as the NSIT constraints (holding for all states) include all conditions in Eq.~\eqref{eq:fullMRcondnpoint}.

The opposite direction, i.e., that the conditions in Eq.~\eqref{eq:fullMRcondnpoint} are enough to imply all NSIT conditions for all state preparations, can be obtained by induction on the length of the sequence, following the idea of the proof of Obs.~\ref{obs:D_NS2}. By Obs.~\ref{obs:D_NS2}, we know that the statement holds for the case $n=3$. The inductive step will be to prove that if the statement holds for sequences of length $n-1$, then it holds for sequences of length $n$.

Thus, let us consider a sequence of $n$ POVMs $\Q_1, \dots, \Q_n$ with instruments $\II_{\Q_1}\ldots\II_{\Q_n}$ and let us assume that 
\begin{equation}\label{eq:n-1cond}
\begin{split}
&\Q_2 \ND  \II_{\Q_3}^*\II_{\Q_4}^*\ldots \Q_n\\
&\ \ \vdots \qquad \vdots \\
&\Q_{n-1} \ND \Q_n.
\end{split}
\end{equation}
are equivalent to the NSIT condition \eqref{eq:nsit_gen} for the subsequence $\Q_2 \rightarrow \Q_3 \rightarrow \dots \Q_n$. 

Adding an additional measurement $\Q_1$ in the beginning of the sequence might introduce disturbance at any point in the sequence. For the inductive step, we need to prove that Eq.~\eqref{eq:n-1cond} together with the new condition
\begin{equation}\label{eq:1cond}
\Q_1 \ND  \II_{\Q_2}^*\II_{\Q_3}^*\ldots \Q_n,
\end{equation}
imply that the $n$-term NSIT conditions are satisfied for any state $\rho$, namely,
\begin{eqnarray}\label{eq:NSITtocheck_rec}
&p(0,{\bf q}_{\rm 2\rightarrow n}|0,{\bf s}_{\rm 2\rightarrow n})
=\sum_{q_{1}}p({\bf q}_{\rm 1\rightarrow n}|1,{\bf s}_{\rm 2\rightarrow n}) , \\
\label{eq:NSITtocheck_2} &p(q_1,0,{\bf q}_{\rm 3\rightarrow n}|1,0,{\bf s}_{\rm 3\rightarrow n})
=\sum_{q_{2}}p({\bf q}_{\rm 1\rightarrow n}|1,1,{\bf s}_{\rm 3\rightarrow n}),\  \\
\nonumber &\qquad \qquad  \qquad  \vdots \\
\nonumber & p({\bf q}_{\rm 1\rightarrow n-2},0, q_{n}|1_{\rm 1\rightarrow n-2},0,1)
\\ \label{eq:NSITtocheck_last} &\qquad \qquad =\sum_{q_{n-1}}p({\bf q}_{\rm 1\rightarrow n}|1_{\rm 1\rightarrow n}),
\end{eqnarray}
for all possible settings ${\bf s}$ and ${\bf q}$. Here $1_{\rm 1\rightarrow n}$ is a shorthand notation for $(1,1,\dots,1)$, and the probabilities are computed via the instruments $\II_{\Q_1}\ldots\II_{\Q_n}$. 

To prove that Eqs.~\eqref{eq:NSITtocheck_2}-\eqref{eq:NSITtocheck_last} are satisfied, one notices that the NSIT conditions for the sequence $2\rightarrow n$ are satisfied for any state. 
Writing the state after the first measurement as $\II_{\Q_i}^{q_1}(\rho)$ we get
\begin{equation}
p(q_1,{\bf q}_{\rm 2\rightarrow n}|1,{\bf s}_{\rm 2\rightarrow n})= \tr{\II^{{\bf q}_{\rm 2\rightarrow n}}_{\Q_{\rm 2\rightarrow n}}\circ \II^{q_1}_{\Q_1} (\rho)}.
\end{equation}
Hence, noting that the NSIT conditions do not depend on the normalisation of $\II_{\Q_i}^{q_1}(\rho)$
one sees that Eqs.~\eqref{eq:NSITtocheck_2}-\eqref{eq:NSITtocheck_last} are satisfied.

Finally, we need to prove that the NSIT conditions in Eq.~\eqref{eq:NSITtocheck_rec} are satisfied for any choice of settings ${\bf s}_{\rm 2\rightarrow n}$ and outcomes ${\bf q}_{\rm 2\rightarrow n}$ and for all initial states. The case ${\bf s}_{\rm 2\rightarrow n}= 1_{\rm 2\rightarrow n}$ is implied by Eq.~\eqref{eq:1cond}. For the case of exactly one '$0$' appearing in ${\bf s}_{\rm 2\rightarrow n}$ in position $k$, with $2\leq k\leq n$, it is sufficient to combine Eq.~\eqref{eq:1cond} with the condition $\Q_k \ND  \II_{\Q_{k+1}}^*\II_{\Q_{k+2}}^*\ldots \Q_n$ appearing in Eq.~\eqref{eq:n-1cond} and with the above reasoning on the sub-normalised initial states. This process can be iterated to get the case of more than one '$0$' appearing in the sequence of settings, which concludes the proof. \qed

{\it Proof of Obs.~\ref{obs:3}.---} The proof follows from the fact that AoT must hold for quantum measurements. In fact, given the conditions $\Q_{\pi(1)} \ND  \II_{\Q_{\pi(2)}}^*\ldots \Q_{\pi(n)}$ for all $\pi(1),\ldots,\pi(n)$, each condition of the form $\Q_k \ND  \II_{\Q_{k+1}}^*\II_{\Q_{k+2}}^*\ldots \Q_n$, associated with the fixed order $(1,\ldots,n)$, can be obtained via the condition for permutation $k,k+1,\ldots,n,1,2,\ldots,k-1$ and applying the AoT conditions on the last part of the sequence, i.e., $1,2,\ldots,k-1$. In this way, we obtain all the conditions of Eq.~\eqref{eq:fullMRcondnpoint} for the order $(1,\ldots,n)$. Repeating the argument for all orders finishes the proof.  \qed

For the proof of Obs.~\ref{obs:2wNDcomm}, we need to summarize the construction of noncommuting repeatable POVMs introduced in \cite{HW2010}. First, the dimension of the Hilbert space must be $d={\rm dim} \mathcal H \geq 5$. We see the space as a direct sum of a $3$ -dimensional and a $(d-3)$-dimensional subspace $\mathcal H_3$ and $\mathcal H_{d-3}$ as $\mathcal H=\mathcal H_3\oplus \mathcal H_{d-3}$. In $\mathcal H_3$ we fix three orthogonal one-dimensional projections $P_1,P_2,P_3$, while in $\mathcal H_{d-3}$ we fix two noncommuting projections $R_1,R_2$. Then we define an observable $\A$ as
\begin{eqnarray}\label{eq:Aobs}
A_1 = P_1 \oplus \frac{1}{2} R_1, \quad A_2 = P_2 \oplus \frac{1}{2} R_2 \, , \\
A_3 = P_3 \oplus ( \openone -\frac{1}{2} R_1 - \frac{1}{2} R_2 ) \, .
\end{eqnarray}
Since $1$ is an eigenvalue for each POVM element of $\A$, the measurement is repeatable and, in particular, it does not disturb itself. Moreover, the elements of $\A$ do not commute since $[R_1,R_2]\neq 0$. 

We can slightly modify the above construction to make the second observable in the sequence not coinciding with the first, hence, showing that the difference between 2-way nondisturbance and noncommutativity is not only a special feature of repeatable (or first kind) measurement scenarios.

{\it Proof of Obs.~\ref{obs:2wNDcomm}.---}. Given $\A$ as above and $\A'$ defined as $A'_1:=A_1$, $A'_2:=A_2+A_3$, the pair of observables $(\A,\A^\prime)$ is nondisturbing in both directions, i.e., $\A \NND \A^\prime$ and yet non commuting.
In fact, since $\A$ is repeatable, there exists $\II^\A$ such that
\begin{gather}
\sum_x (\II_{A}^{x})^* (A'_1) = A_1= A^\prime_1, \\
\sum_x (\II_{A}^{x})^*  (A'_2) = \sum_x (\mathcal I_{A}^{x})^* (A_2 + A_3)= A_2 + A_3= A'_2.
\end{gather}
In the other direction, we can simply take the instrument given by $\II_{\A^\prime}^1 := \II_{\A}^{1}$ and 
$\II_{\A^\prime}^2 := \II_{\A}^{2}+\II_{\A}^{3}$. 
Then $\sum_x (\II_{\A}^{x})^* = \sum_x (\II_{\A^\prime}^{x})^*$, hence, $\A^\prime$ does not disturb $\A$ as well. Finally, the observables are still noncommuting since $[A_2,A^\prime_1]=[A_2,A_1]\neq 0$. \qed

Starting from this example, we consider a sequence of $3$ measurements and construct a {\it hollow triangle} of POVMs with respect to the nondisturbance relations, i.e., a triple of observables such that each pair is mutually nondisturbing, but still any instrument associated with the first (according to a specific order) must disturb any nondisturbing sequence of the other two observables. This is analogous to the hollow triangle of joint measurability introduced in Ref.~\cite{Heinosaari2008, Yu2013holl}.

{\it Proof of Obs.~\ref{obs:hollow_triang}.---}. Consider a Hilbert space of dimension $d\geq 5$. One can show that there exist two observables $\A$ and $\B$ that have the following properties: (i) they are of the first kind and (ii) $\A \NND \B$ and $\A \NND \A$ holds, i.e., all the two point sequences can be implemented with nondisturbing instruments. 
However, the sequences $\A \rightarrow (\A \rightarrow \B)$ and $\A \rightarrow (\B \rightarrow \A)$ cannot be implemented with a nondisturbing instrument.
In particular, the second observable can be taken as 
\begin{eqnarray}\label{eq:b12}
B_1 = \openone \oplus 0,\ B_2 = 0 \oplus \openone ,
\end{eqnarray}
where the direct sum is w.r.t. the direct sum decomposition $\mathcal H=\mathcal H_3\oplus \mathcal H_{d-3}$.

To prove the above statements, let us consider $\A$ as in Eq.~(\ref{eq:Aobs}). First, let us prove the case $d=5$. The observable $\B$ commutes with $\A$ and, hence, the L\"uders instrument is nondisturbing in both directions. Let us consider the POVM with elements $E_{ij}=\sqrt{B_j}A_i \sqrt{B_j}=A_iB_j$, that is obtained from the sequences $(\A \rightarrow \B)$ and $(\B \rightarrow \A)$ with the L\"uders instrument. We have that $E_{11}$, $E_{21}$ and $E_{31}$ are the three projectors $P_1\oplus 0$, $P_2\oplus 0$ and $P_3\oplus 0$. The other three POVM elements, $E_{i2}$, are given by $0 \oplus R_1/2$, $0 \oplus R_2/2$ and $0 \oplus (\openone_3 - R_1/2 - R_2/2)$, respectively. 
Let us choose $R_1$ and $R_2$ as the two-dimensional projectors $(\openone + \sigma_x)/2$ and $(\openone + \sigma_z)/2$. 

According to Proposition~3 in Ref.~\cite{HW2010}, if $E_{ij}^2 \in {\rm span}\{E_{ij} |(i,j) \}$, then $\A$ is nondisturbing w.r.t. to $\E$, if and only if all their elements commute. Let us show that such operators belong to the span. The statement is trivial for $E_{i1}$ since they are all projectors, and similarly for the case $E_{12},E_{22}$, since also $R_1$ and $R_2$ are projectors. The only non trivial case is for $E_{32}$, which amounts to show that ${(\openone - R_1/2 - R_2/2)^2\in {\rm span}(R_1, R_2, \openone)}$. First, let us notice that ${\rm span}(R_1, R_2, \openone)={\rm span}(\sigma_x, \sigma_z, \openone)$. Then, we have $\left(\openone - R_1/2 - R_2/2\right)^2 = \left(\openone - \tfrac{1}{4}(\openone + \sigma_x) - \tfrac{1}{4}(\openone + \sigma_z) \right)^2$.
The only elements out of the span could be the product of $\sigma_x$ and $\sigma_z$. However, since we are taking the square, such products will appear as $\sigma_x \sigma_z + \sigma_z \sigma_x$, which is zero. 

Then $\A$ is nondisturbing w.r.t. to $\E$, if and only if all their elements commute.
However, this is not the case, since, e.g., $[A_1, E_{22}]\neq 0$. Hence, we obtain that
the sequence $\A \rightarrow \E$ cannot be implemented with a nondisturbing instrument. 
The proof for $d=5$ ends by noting that, since $\B$ is projective, the L\"uders instrument gives the unique joint measurement of $\A$ and $\B$ (see Prop. 8 in \cite{Heinosaari2008}) and, hence, any other nondisturbing implementation (if existing) results in the same joint measurement. The proof can be easily extended to the case of arbitrary odd dimension $d$, by decomposing  $\mathcal H_{d-3}=\mathcal{H}_2 \otimes \mathcal{H}_{\frac{d-3}{2}}$ and projectors $\widetilde{R}_i = R_i \otimes \openone$. Similarly for even $d$, we just use the decomposition $\mathcal H=\mathcal H_4\oplus \mathcal H_{d-4}$, increase the rank of $P_3$ and decompose $\mathcal H_{d-4}$ as the  tensor product $\mathcal{H}_2 \otimes \mathcal{H}_{\frac{d-4}{2}}$.\qed

The above is an example of a triple of POVMs that shows {\it genuine triplewise nondisturbance} in the sense that it satisfies Eq.~(\ref{NSIT4a}) for all orderings (and some instruments), but it violates Eq.~(\ref{NSIT1a}) for some ordering (and all instruments of the first measurement together with any nondisturbing instrument of the second measurement). It requires the dimension of the Hilbert space to be at least $d=5$ and two of the POVMs to be equal. The following example, instead, circumvents these limitations. However, one should note that not all of the pairs are anymore two-way nondisturbing. The example is minimal in the sense that it uses a qutrit (in qubit hollow triangles don't exist because nondisturbance is equivalent to commutativity) and binary measurements.

Consider the following triple $(\A,\B,\C)$ of POVMs 
\begin{equation}
\begin{aligned}
A_1 &= \frac 1 4 \left(\begin{matrix}2&0&-\sqrt 2 \\
0&4&0 \\
-\sqrt 2 & 0 & 3 \end{matrix}\right)  \ & \  A_2 &= \frac 1 4 \left(\begin{matrix}2&0&\sqrt 2 \\
0&0&0 \\
\sqrt 2 & 0 & 1 \end{matrix}\right) \\
B_1 &= \left(\begin{matrix}1&0&0 \\
0&0&0 \\
0 & 0 & 1 \end{matrix}\right)  \ & \  B_2 &=  \left(\begin{matrix}0&0&0 \\
0&1&0 \\
0 & 0 & 0 \end{matrix}\right) \\
C_1 &= \frac 1 {12} \left(\begin{matrix}4&0&0 \\
0&6&0 \\
0 & 0 & 5 \end{matrix}\right)  \ & \  C_2 &= \frac 1 {12} \left(\begin{matrix}8&0&0 \\
0&6&0 \\
0 & 0 & 7 \end{matrix}\right) .
\end{aligned}
\end{equation}
The sequence $\A\rightarrow \B \rightarrow \C$ is such that $\A\stackrel{ND}{\leftrightarrow}\B$, $\A\stackrel{ND}{\rightarrow}\C$ (while $\C$ must disturb $\A$) and $\B\stackrel{ND}{\leftrightarrow}\C$.
However: (i) $\A$ disturbs all the nondisturbing implementations of the sequences $\B\leftrightarrow \C$; and moreover (ii) the sequence $\B\rightarrow \A \rightarrow \C$ is nondisturbing, which implies, in particular, that the triple $(\A , \B , \C)$ is jointly measurable.

{\it Proof.---} The proof makes use of the construction in \cite{HW2010}, exploiting Remark 2 of \cite{arveson1972}.
Let us consider the nilpotent channel $\Lambda$ that maps a $3\times3$ matrix $a_{ij}$ into ${\rm diag(a_{11}, a_{22}, (a_{11}+a_{22})/2)}$. This channel is such that $\Lambda(\Lambda(X))=\Lambda(X)$ for all $3\times 3$ matrices $X$. A decomposition of $\Lambda$ into Kraus operators is given by
\begin{equation}
\begin{gathered}
K_1=\tfrac 1 2 \left(\begin{matrix} \sqrt 2&0&0 \\
0&0&0 \\
-1 & 0 & 0 \end{matrix}\right) \qquad K_2=\tfrac 1 {10} \left(\begin{matrix} 0&0&0 \\
0&-\sqrt{10}&0 \\
0&2\sqrt{10}&0 \end{matrix}\right) \\
K_3=\tfrac 1 2 \left(\begin{matrix} 0&0&0 \\
0&\sqrt 2&0 \\
0 & 0 & 0 \end{matrix}\right) \qquad K_4=\tfrac 1 {20} \left(\begin{matrix} 0&0&0 \\
0&4\sqrt{10}&0 \\
0&2\sqrt{10}&0 \end{matrix}\right) \\
K_5=\tfrac 1 2 \left(\begin{matrix} \sqrt 2&0&0 \\
0&0&0 \\
1 & 0 & 0 \end{matrix}\right) .
\end{gathered}
\end{equation}
The POVM elements of $\C$ are constructed as $C_1=\Lambda(\tfrac 1 3 P_1 + \tfrac 1 2 P_2)$ and $C_2=\Lambda(\tfrac 2 3 P_1 + \tfrac 1 2 P_2)$, where $P_1={\rm diag}(1,0,0)$ and $P_2={\rm diag}(0,1,0)$ are the projectors onto the first two computational basis elements. The POVM elements of $\A$ are constructed as $A_i=\mathcal I_i^*(\openone)$ 
where $\mathcal I_1^*(\cdot)=\sum_{j=1}^4 K_j \cdot K_j^*$ and $\mathcal I_2^*(\cdot)=K_5 \cdot K_5^*$. Thus, we have that $\sum_j \mathcal I_1^*(\cdot) =\Lambda$ and therefore $\A$ does not disturb $\C$ by construction. The rest of pairwise nondisturbance relations follow from commutativity.

However, $\A$ must disturb $\E:=\II_\B^*\C$. This can be seen as follows: we have that  
the POVM elements $\{ E_{ij} \}$ span the whole subspace of $3\times 3$ diagonal matrices. Therefore we have $E_{ij}^2 \in {\rm span} \{ E_{ij} \}$ for all $(i,j)$. This implies that $A\stackrel{ND}{\rightarrow}E$ if and only if $\A$ and $\E$ commute, which is not the case. Finally, since $\B$ is a PVM, the L\"uders instrument gives the unique joint measurement of $\B$ and $\C$ (cf. Proposition~8 in Ref.~\cite{Heinosaari2008}) and, hence, any other nondisturbing implementation (if existing) would give the same sequential POVM (i.e. the joint measurement).
Also, the fact that $\C$ must disturb $\A$ follows from $A_x^2 \in {\rm span}\{A_x\}$ for all $x$ and the fact that $\A$ and $\C$ do not commute. In particular, since $A_2$ is proportional to a projector and such that, e.g., $[C_1,A_2]\neq 0$, then it must be disturbed by every instrument implementing $\C$.
 Finally, it can be easily checked via explicit calculation that the sequence  $\B\rightarrow \A \rightarrow \C$ is nondisturbing. It is sufficient to consider as a sequential nondisturbing measurement for $\A \rightarrow \C$ the one given by the instruments $\mathcal I_1^*(\cdot)=\sum_{j=1}^4 K_j \cdot K_j^*$ and $\mathcal I_2^*(\cdot)=K_5 \cdot K_5^*$. The resulting joint POVM $F_{ij}:= \mathcal I_i^*(C_j)$, then, commutes with $B_k$ for any $i,j,k$, providing a nondisturbing measurement for the sequence $\B\rightarrow \A \rightarrow \C$. As a further consequence, the triple $(\A , \B , \C)$ is also jointly measurable . \qed

\section{Time-dependent version of Observation 1}\label{App:A}

First, we show that not all pairs of POVMs can be connected by time evolution. In fact, consider two POVMs $\Q$ and $\Q'$, such that $\Q_x$ has maximal eigenvalue $\lambda$ and $\Q'_x$ has maximal eigenvalue $\lambda' > \lambda$. Then, there is no channel $\Lambda$, i.e., a CPTP map, such that $\Lambda^*(\Q_x)=\Q'_x$, since by positivity $ \lambda \openone - \Q_x \geq 0\Rightarrow \Lambda^*(\lambda \openone - \Q_x ) \geq 0$, whereas by linearity and unitality of $\Lambda^*$,  $\Lambda^*(\lambda \openone- \Q_x )=\lambda \openone - \Q'_x \ngeq 0$. Similarly, one can prove that if the minimal eigenvalue of $\Q'_x$ is strictly smaller than the minimal eigenvalue of $\Q_x$, there is no channel mapping $\Q$ into $\Q'$. Moreover, one can relax the requirement $\Lambda^*(\Q_x)=\Q'_x$, $\forall x$, to  $\Lambda^*(\Q_x)=\Q'_{\pi(x)}$, $\forall x$, for some permutation $\pi$ of the outcomes, i.e., by allowing as a possible operation also a relabeling of outcomes, and construct a similar example. Another simple example is given by the maximally mixed POVM $\{\openone / N\,\ldots,\openone / N \}$, which is always mapped onto itself due, again, to linearity and unitality of the adjoint channel. 

Below, we show how to modify Observation 1 to cover the time-dependent case.
Given measurements $\Q_1, \Q_2$ and $\Q_3$ together with time evolutions $\Lambda_{t_1\rightarrow t_2}$ and $\Lambda_{t_2\rightarrow t_3}$ between them, we have the following observation.

\begin{lemma}
The no-signalling in time conditions on a limited set $S$ of states are equivalent to the fact that $(i)$ the first measurement doesn't disturb the sequential measurement $\Lambda_{t_1\rightarrow t_2}^*\circ\mathcal I_{\Q_2}^*\circ\Lambda_{t_2\rightarrow t_3}^*(\Q_3)$ for states in the set $S$ and that $(ii)$ the second measurement does not disturb the time evolved version of the third measurement (i.e.$\Lambda_{t_2\rightarrow t_3}^*(\Q_3)$) in the limited set of states $\left\lbrace \Lambda_{t_1\rightarrow t_2}(\rho)|\rho\in S\right\rbrace\cup\left\lbrace \frac{\Lambda_{t_1\rightarrow t_2}(\mathcal I_{\Q_1}^x(\rho))}{\tr{\Lambda_{t_1\rightarrow t_2}(\mathcal I_{\Q_1}^x(\rho))}}|\rho\in S,x\right\rbrace$. Note that here again $(i)$ includes the assumption that the instrument $\mathcal I_{\Q_2}$ fulfills $(ii)$.
\end{lemma}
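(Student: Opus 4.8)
The plan is to reduce this time-dependent statement to the all-states Observation~\ref{obs:D_NS2} by absorbing the two evolutions into effective observables, while carefully tracking that the relevant state set is now the restricted $S$ rather than the whole state space. First I would write the probability of the full three-measurement sequence in the Heisenberg picture,
\begin{equation}
p(x,y,z|1,1,1)=\tr{\II_{\Q_1}^x(\rho)\,\big[\Lambda_{t_1\rightarrow t_2}^*\circ(\II_{\Q_2}^y)^*\circ\Lambda_{t_2\rightarrow t_3}^*\big](Q_3^z)},
\end{equation}
and set $\tilde Q_3^z:=\Lambda_{t_2\rightarrow t_3}^*(Q_3^z)$ for the third effect pulled back to time $t_2$. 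By Remark~\ref{rem:1}, and since the arrow-of-time constraints hold automatically for quantum instruments (as used in the proof of Obs.~\ref{obs:3}) and hence for every $\rho\in S$, it suffices to reproduce the three reduced conditions Eqs.~(\ref{NSIT11}),~(\ref{NSIT22}),~(\ref{NSIT23}), each restricted to $\rho\in S$.

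The two straightforward matchings come from transporting a single evolution across the trace. Condition~$(i)$, read on $\rho\in S$, asserts that $\Q_1$ leaves invariant the statistics of the sequential observable $\Lambda_{t_1\rightarrow t_2}^*\circ\II_{\Q_2}^*\circ\Lambda_{t_2\rightarrow t_3}^*(\Q_3)$; expanding both sides this is exactly Eq.~(\ref{NSIT11}) with the evolutions inserted. Condition~$(ii)$ evaluated on the first family $\{\Lambda_{t_1\rightarrow t_2}(\rho):\rho\in S\}$ reads $\tr{\Lambda_{t_1\rightarrow t_2}(\rho)\,\tilde Q_3^z}=\sum_y\tr{\II_{\Q_2}^y(\Lambda_{t_1\rightarrow t_2}(\rho))\,\tilde Q_3^z}$, which after one pullback is precisely Eq.~(\ref{NSIT23}) for $\rho\in S$. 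Both of these are mechanical adjoint bookkeeping.

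The delicate point is Eq.~(\ref{NSIT22}), demanding that the middle measurement not disturb the pair at times $t_1,t_3$. In the all-states proof of Obs.~\ref{obs:D_NS2} this was \emph{automatically redundant}: substituting the unnormalised state $\II_{\Q_1}^x(\sigma)$ into Eq.~(\ref{NSIT4a}) yielded Eq.~(\ref{NSIT22}) for free. With the state set cut down to $S$ that argument collapses, since $\Lambda_{t_1\rightarrow t_2}(\II_{\Q_1}^x(\rho))$ generically escapes $S$. This is exactly what the second family in $(ii)$ repairs: imposing condition~$(ii)$ on the normalised images $\Lambda_{t_1\rightarrow t_2}(\II_{\Q_1}^x(\rho))/\tr{\Lambda_{t_1\rightarrow t_2}(\II_{\Q_1}^x(\rho))}$, and using that the NSIT equality is linear in its input so the normalisation cancels, reinstates Eq.~(\ref{NSIT22}) for each $\rho\in S$ and each first outcome $x$. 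I expect recognising that this lost redundancy must be restored by hand through the post-first-measurement states — together with the accompanying scale-invariance check — to be the main conceptual hurdle; the converse implication, that the three reduced NSIT conditions on $S$ give back $(i)$ and $(ii)$, then follows by reversing the very same substitutions, completing the equivalence.
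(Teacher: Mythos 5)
Your proposal is correct and takes essentially the same approach as the paper's proof: the identical three-way matching of condition $(i)$ to Eq.~(\ref{NSIT11}), condition $(ii)$ on $\left\lbrace \Lambda_{t_1\rightarrow t_2}(\rho)\,|\,\rho\in S\right\rbrace$ to Eq.~(\ref{NSIT23}), and condition $(ii)$ on the normalised post-first-measurement states to Eq.~(\ref{NSIT22}), with the normalisation cancelling by linearity. The only cosmetic difference is that the paper writes the converse as an explicit contrapositive (a violation of $(i)$ or $(ii)$ produces a violated NSIT condition) rather than "reversing the substitutions," which carries the same mathematical content.
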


\begin{proof}
Assume that the nondisturbance conditions hold. The observed three point probability distribution reads
\begin{align*}
p(x,y,z|1,1,1)=\tr{\Lambda_{t_2\rightarrow t_3}\circ\mathcal I_{\Q_2}^y\circ\Lambda_{t_1\rightarrow t_2}\circ\mathcal I_{\Q_1}^x(\rho)Q_3(z)}.
\end{align*}

For this scenario there are three NSIT conditions to be checked. As the first measurement does not disturb the sequential measurement of the second and the third measurement one gets the first NSIT condition
\begin{align}
\sum_x &p(x,y,z|1,1,1)=\\
&=\sum_x \tr{\rho\mathcal I_{\Q_1}^{x*}\circ\Lambda_{t_1\rightarrow t_2}^*\circ\mathcal I_{\Q_2}^{y*}\circ\Lambda_{t_2\rightarrow t_3}^*(Q_3(z))}\\
&=\tr{\rho\Lambda_{t_1\rightarrow t_2}^*\circ\mathcal I_{\Q_2}^{y*}\circ\Lambda_{t_2\rightarrow t_3}^*(Q_3(z))}\\
&=p(0,y,z|0,1,1).
\end{align}

Using that the second measurement does not disturb the third measurement on the set $\left\lbrace \frac{\Lambda_{t_1\rightarrow t_2}(\mathcal I_{\Q_1}^x(\rho))}{\tr{\Lambda_{t_1\rightarrow t_2}(\mathcal I_{\Q_1}^x(\rho))}}|\rho\in S,x\right\rbrace$ of states we get the second condition
\begin{align}
\sum_y &p(x,y,z|1,1,1)=\\
&=\sum_y \tr{\rho\mathcal I_{\Q_1}^{x*}\circ\Lambda_{t_1\rightarrow t_2}^*\circ\mathcal I_{\Q_2}^{y*}\circ\Lambda_{t_2\rightarrow t_3}^*(Q_3(z))}\\
&=\tr{\rho\mathcal I_{\Q_1}^{x*}\circ\Lambda_{t_1\rightarrow t_2}^*\circ\Lambda_{t_2\rightarrow t_3}^*(Q_3(z))}\\
&=p(x,0,z|1,0,1).
\end{align}

The last NSIT condition follows directly from the fact that the second measurement does not disturb the third one on the set $\left\lbrace \Lambda_{t_1\rightarrow t_2}(\rho)|\rho\in S\right\rbrace$ of states.

For the other direction of the proof, one can show that the violation of either one of the nondisturbance conditions leads to the violation of the NSIT conditions. First, assume that the second measurement disturbs the third for some state in the set $\{\Lambda_{t_1\rightarrow t_2}(\rho)|\rho\in S\}$. By writing
\begin{align}
\sum_y & p(0,y,z|0,1,1)=\\
&=\sum_y\tr{\Lambda_{t_1\rightarrow t_2}(\rho)\mathcal I_{\Q_2}^{y*}\circ\Lambda_{t_2\rightarrow t_3}^*(Q_3(z))}\\
&\neq\tr{\Lambda_{t_1\rightarrow t_2}(\rho)\Lambda_{t_2\rightarrow t_3}^*(Q_3(z))}\\
&=p(0,0,z|0,0,1)
\end{align}
one sees that the third NSIT condition does not hold for some $\rho\in S$ and some $z$.

Second, assume that the second measurement is disturbing the third one for some state in the set 
$\left\lbrace \frac{\Lambda_{t_1\rightarrow t_2}(\mathcal I_{\Q_1}^x(\rho))}{\tr{\Lambda_{t_1\rightarrow t_2}(\mathcal I_{\Q_1}^x(\rho))}}|\rho\in S,x\right\rbrace$ of states. Then
\begin{align}
\sum_y &p(x,y,z|1,1,1)=\\
&=\sum_y \tr{\rho\mathcal I_{\Q_1}^{x*}\circ\Lambda_{t_1\rightarrow t_2}^*\circ\mathcal I_{\Q_2}^{y*}\circ\Lambda_{t_2\rightarrow t_3}^*(Q_3(z))}\\
&\neq\tr{\rho\mathcal I_{\Q_1}^{x*}\circ\Lambda_{t_1\rightarrow t_2}^*\circ\Lambda_{t_2\rightarrow t_3}^*(Q_3(z))}\\
&=p(x,0,z|1,0,1)
\end{align}
for some $\rho\in S$ and some $x,z$.

Finally, for the case of the first measurement disturbing the sequential measurement of the second and the third measurement we check the first NSIT condition
\begin{align}
\sum_x &p(x,y,z|1,1,1)=
\\ &=\sum_x\tr{\rho\mathcal I_{\Q_1}^{x*}\circ\Lambda_{t_1\rightarrow t_2}^*\circ\mathcal I_{\Q_2}^{y*}\circ\Lambda_{t_2\rightarrow t_3}^*(Q_3(z))}\\
&\neq\tr{\rho\Lambda_{t_1\rightarrow t_2}^*\circ\mathcal I_{\Q_2}^{y*}\circ\Lambda_{t_2\rightarrow t_3}^*(Q_3(z))}\\
&=p(0,y,z|0,1,1)
\end{align}
for some $\rho\in S$ and some $y,z$.
\end{proof}

The time-dependent version of Observation 1 follows directly: the above lemma is formulated for a limited set of states. When applied to the whole state space, one sees that $\left\lbrace\frac{\Lambda_{t_1\rightarrow t_2}(\mathcal I_{\Q_1}^x(\rho))}{\tr{\Lambda_{t_1\rightarrow t_2}(\mathcal I_{\Q_1}^x(\rho))}}|\rho\in\mathcal S(\mathcal H),x\right\rbrace\subseteq\{\Lambda_{t_1\rightarrow t_2}(\rho)|\rho\in\mathcal S(\mathcal H)\}$. Hence, we have

\begin{obs}\label{obs:6}
For time-dependent scenarios with optimisation over states, macrorealism is equivalent to the first measurement not disturbing the sequential measurement $\Lambda_{t_1\rightarrow t_2}^*\circ\mathcal I_{\Q_2}^*\circ\Lambda_{t_2\rightarrow t_3}^*(\Q_3)$ (for all states) and the second measurement not disturbing the time evolved version of the third one (i.e.$\Lambda_{t_2\rightarrow t_3}^*(\Q_3)$) within the time evolved set $\{\Lambda_{t_1\rightarrow t_2}(\rho)|\rho\in\mathcal S(\mathcal H)\}$ of states.
\end{obs}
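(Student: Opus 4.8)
The plan is to obtain Observation~\ref{obs:6} as a direct specialization of the preceding Lemma to the case $S=\mathcal S(\mathcal H)$, the full set of density operators, which is exactly what ``optimisation over states'' means (macrorealism is required to hold for every preparation). First I would recall that the Lemma already establishes, for an arbitrary $S$, that the NSIT conditions on $S$ are equivalent to $(i)$ the first measurement not disturbing $\Lambda_{t_1\rightarrow t_2}^*\circ\mathcal I_{\Q_2}^*\circ\Lambda_{t_2\rightarrow t_3}^*(\Q_3)$ on $S$, and $(ii)$ the second measurement not disturbing $\Lambda_{t_2\rightarrow t_3}^*(\Q_3)$ on the union $\{\Lambda_{t_1\rightarrow t_2}(\rho)\mid\rho\in S\}\cup\{\Lambda_{t_1\rightarrow t_2}(\mathcal I_{\Q_1}^x(\rho))/\tr{\Lambda_{t_1\rightarrow t_2}(\mathcal I_{\Q_1}^x(\rho))}\mid\rho\in S,\,x\}$. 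Setting $S=\mathcal S(\mathcal H)$ turns $(i)$ into exactly the condition demanded by Observation~\ref{obs:6}, so the only remaining task is to show that $(ii)$ collapses to nondisturbance on the single time-evolved set $\{\Lambda_{t_1\rightarrow t_2}(\rho)\mid\rho\in\mathcal S(\mathcal H)\}$.

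The key step is the set inclusion $\{\Lambda_{t_1\rightarrow t_2}(\mathcal I_{\Q_1}^x(\rho))/\tr{\Lambda_{t_1\rightarrow t_2}(\mathcal I_{\Q_1}^x(\rho))}\mid\rho\in\mathcal S(\mathcal H),\,x\}\subseteq\{\Lambda_{t_1\rightarrow t_2}(\rho)\mid\rho\in\mathcal S(\mathcal H)\}$, which reduces the union to its first term. I would establish it using two facts. First, each instrument element $\mathcal I_{\Q_1}^x$ is completely positive by Eq.~\eqref{eq:def_instr}, so for any state $\rho$ with $\tr{\mathcal I_{\Q_1}^x(\rho)}>0$ the normalized post-measurement operator $\sigma_x:=\mathcal I_{\Q_1}^x(\rho)/\tr{\mathcal I_{\Q_1}^x(\rho)}$ is again a legitimate element of $\mathcal S(\mathcal H)$. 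Second, $\Lambda_{t_1\rightarrow t_2}$ is a channel and hence trace preserving, so $\tr{\Lambda_{t_1\rightarrow t_2}(\mathcal I_{\Q_1}^x(\rho))}=\tr{\mathcal I_{\Q_1}^x(\rho)}$, which lets me commute the normalization past the channel to obtain $\Lambda_{t_1\rightarrow t_2}(\mathcal I_{\Q_1}^x(\rho))/\tr{\Lambda_{t_1\rightarrow t_2}(\mathcal I_{\Q_1}^x(\rho))}=\Lambda_{t_1\rightarrow t_2}(\sigma_x)$. This exhibits every element of the second set as $\Lambda_{t_1\rightarrow t_2}$ applied to a genuine state, proving the inclusion; the branches with $\tr{\mathcal I_{\Q_1}^x(\rho)}=0$ carry zero probability and impose no constraint, so they can be discarded.

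With the inclusion in hand, the union in condition $(ii)$ equals $\{\Lambda_{t_1\rightarrow t_2}(\rho)\mid\rho\in\mathcal S(\mathcal H)\}$, which is precisely the time-evolved set appearing in the Observation. Hence the Lemma's two conditions, specialized to the full state space, coincide verbatim with statements $(i)$ and $(ii)$ of Observation~\ref{obs:6}, finishing the argument. I expect there to be no genuine obstacle here, only one point requiring care: the normalization bookkeeping in the second paragraph, where trace preservation of $\Lambda_{t_1\rightarrow t_2}$ is exactly what is needed to pull the normalization through the channel and identify the renormalized time-evolved state with $\Lambda_{t_1\rightarrow t_2}(\sigma_x)$. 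No further estimates or optimizations are involved, since all the analytic content already resides in the Lemma.
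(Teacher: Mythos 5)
Your proposal is correct and follows essentially the same route as the paper: the paper likewise obtains Observation~\ref{obs:6} by applying the Lemma to the full state space $\mathcal S(\mathcal H)$ and invoking the inclusion $\left\lbrace\Lambda_{t_1\rightarrow t_2}(\mathcal I_{\Q_1}^x(\rho))/\tr{\Lambda_{t_1\rightarrow t_2}(\mathcal I_{\Q_1}^x(\rho))}\,|\,\rho\in\mathcal S(\mathcal H),x\right\rbrace\subseteq\left\lbrace\Lambda_{t_1\rightarrow t_2}(\rho)\,|\,\rho\in\mathcal S(\mathcal H)\right\rbrace$, which collapses the union in condition $(ii)$. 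The only difference is that the paper asserts this inclusion without comment, whereas you justify it explicitly (positivity of the instrument elements plus trace preservation of $\Lambda_{t_1\rightarrow t_2}$ to commute the normalization past the channel), which is a harmless and indeed welcome elaboration.
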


\section{Disturbance measure}\label{App:C}

\subsection{Semidefinite programming definition of the disturbance measure}\label{app:semidef}
For finite dimensional Hilbert spaces, an SDP definition of the disturbance measure can be defined as follows.
First, let us recall the Choi-Jamio\l{}kowski isomorphism~\cite{Jamiolkowski,Choi}, namely,
\begin{equation}
\begin{split}
(\II_\A^x)^*(B_y) = {\text{tr}_2}[M_\A^x\  \openone \otimes B_y],\\
\text{ with } M_\A^x :=  \left[ ({\rm Id} \otimes \II_\A^x) \ketbra{\Omega} \right]^t,
\end{split}
\end{equation}
where $\ket{\Omega}$ is the (unnormalized) maximally entangled state $\ket{\Omega} = \sum_i \ket{ii}$, the transposition $^t$ is taken w.r.t. the $\{\ket{ij} \}_{ij}$ basis, and ${\rm Id}$ represents the identity channel. Using that the norm of an Hermitian operator $X$ can be written as $\|X\| = \min \{ \lambda\ | -\lambda \openone \leq X \leq \lambda \openone\}$, we can formulate  $D_{\A}(\B)$ as the following SDP:
\begin{equation}
\begin{split}
\min_{\{\lambda_y\}_y, \{M_\A^x\}_x}  &\sum_y \lambda_y \\
\text{subject to: }& -\lambda_y \openone \leq  {\text{tr}_2}[\sum_x M_\A^x \openone \otimes B_y] - B_y\leq \lambda_y \openone,\\
&\text{ for all } y,\\
& M_\A^x \geq 0, \text{ for all } x\\
& \text{tr}_2[M_\A^x]=A_x \text{ for all } x .
\end{split}
\end{equation}

Contrary to the case of two measurements, in the case of three (or more) measurements the evaluation of the measures does not seem like an SDP anymore. Nevertheless, one can still try to perform some numerical optimization using SDP. In fact, the function we want to minimize is linear in each argument $\II_{\Q_i}$ (or, more precisely, in its Choi matrices $\{M_{\Q_i}^{q_i}\}_{q_i}$), whenever the other instruments are kept fixed. This gives an SDP analogous to that of Eq.~\eqref{eq:DAB_def}. The so-called see-saw method, then, consists in iterating the maximization over each argument $\{M_{\Q_i}^{q_i}\}_{q_i}$, keeping at each step the found optimal solution. Even though the convergence to the optimal solution cannot be guaranteed, this method has been widely applied to problems in nonlocality and steering \cite{Werner2001See-saw, YCLPRA2007See-saw, PalPRA2010, CavalcantiReview2017}.

\subsection{Free operations for macrorealism}
The first example of free operations are classical post-processings of POVMs. Recall that a (classically) post-processed version $\td\A$ of a POVM $\A$ is defined through the POVM elements $\td A^a=\sum_{a'}p(a|a')A_{a'}$, where $p(\cdot|a')$ is a probability distribution for every $a'$. To be more concrete, we wish to concentrate on the three measurement setting and carry out the calculation using a term from the sum in Eq.~(\ref{Macromeas3}) with a trivial permutation. Note that any instrument $\II_{\Q_i}$ of $\Q_i$ can be post-processed into an instrument of $\td\Q_i$ as $\II_{\td\Q_i}^a=\sum_{a'}p(a|a')\II_{\Q_i}^{a'}$. Using post-processed versions $\{\td\Q_i\}_{i=2}^3$ of the involved POVMs and the respective post-processed instruments one has by triangle inequality
\begin{align}
&\sum_{y,z} \Big\|(\Lambda_{\mathcal I_{\td\Q_1}}^*-{\rm Id})(\II_{\td \Q_2}^{y*}(\td Q_3^z))\Big\|\nonumber\\
\leq&\sum_{y,z} \Big\|(\Lambda_{\mathcal I_{\Q_1}}^*-{\rm Id})(\II_{\Q_2}^{y*}(Q_3^z))\Big\|,
\end{align}
where ${\rm Id}$ is the identity map. On the left hand side we have used an instrument that originates from the one on the right hand side. Hence, it is clear that when taking the infimum on the right hand side, we cover some (but not necessarily all) possible instruments on the left hand side. Taking the infimum on the left hand side can only decrease the value and, hence, we conclude that classical post-processing is a free operation for $i=2,3$. For the case $i=1$ we refer to Corollary 2 in Ref.~\cite{TeikoJPA2017}, where it is shown that the set of total channels (of instruments) associated to a given POVM is a subset of the set of total channels (of instruments) associated to a post-processed version of this POVM. Hence, the infimum over instruments results in a (possibly) larger set of total channels for post-processed POVMs, than for the original ones. We conclude that 
\begin{align}
{\rm MR}(\{\td \Q_i\}_{i=1}^3)\leq {\rm MR}(\{\Q_i\}_{i=1}^3).
\end{align}

Let us consider global preprocessing, i.e., a quantum channel applied to all POVMs in a sequence of measurements. Define $\td\Q_1= \EE^*(\Q_1)$ with elements $\td Q_1^x=\EE^*(Q_1^x)$.  Note that we can compose the channel $\EE$ with any instrument $\II_{\Q_i}$ of $\Q_i$ and obtain another instrument $\II^\prime_{\Q_i}:=\EE \circ \II_{\Q_i}$ of $\Q_i$. Furthermore, we can compose any instrument $\II_{\Q_1}$ of $\Q_i$ with the channel $\EE$ and obtain an instrument $\II_{\td\Q_1}:=\II_{\Q_1} \circ \EE$ implementing $\td\Q_1$. 

As an example, consider the case of three measurements $\Q_1,\Q_2,\Q_3$, the preprocessed measurements will be $\td \Q_i := \EE (\Q_i)$. Given some instruments $\II_{\Q_i}$ for the original observables, we consider new instruments $\II_{\td\Q_i}:=\II_{\Q_i} \circ \EE$. We have
\begin{align}\label{eq:pre-proc}
&\sum_{y,z} \Big\|(\Lambda_{\mathcal I_{\td\Q_1}}^*-{\rm Id})(\II_{\td \Q_2}^{y*}(\td Q_3^z))\Big\| \nonumber\\
=&\sum_{y,z} \Big\|(\EE^*\circ \Lambda_{\mathcal I_{\Q_1}}^*-{\rm Id})(\EE^*\circ \II_{ \Q_2}^{y*}(\EE^*( Q_3^z)))\Big\|\nonumber\\
\leq&\sum_{y,z} \Big\|(\Lambda_{\mathcal I_{\Q_1}}^*\circ \EE^*-{\rm Id})(\II_{ \Q_2}^{y*}\circ\EE^* (Q_3^z))\Big\|,
\end{align}
where we used that the leftmost $\EE^*$ can be simply consider to be applied to the state, over which we maximize. Notice that $\II_{ \Q_i}^{*}\circ\EE^*$ is a valid instrument for $\Q_i$, however, it will not be, in general, the optimal one, i.e., the one that minimizes the disturbance. One possibility, is to consider an invertible channel $\EE$, the optimal instruments $\II_{\Q_i}^{\rm opt}$ minimizing $D_{\Q_1}( \Q_2, \Q_3, \II_{\Q_1},\II_{ \Q_2})$, and define $\II_{\Q_i} := \EE^{-1}\circ \II_{\Q_i}^{\rm opt}$, which is still a valid instrument for $\Q_i$. By substituting in Eq.~\eqref{eq:pre-proc}, we obtain
\begin{align}\label{eq:pre-proc2}
&\sum_{y,z} \Big\|(\Lambda_{\mathcal I_{\Q_1}}^*\circ \EE^*-{\rm Id})(\II_{ \Q_2}^{y*}\circ\EE^* (Q_3^z))\Big\|\nonumber\\
=&\sum_{y,z} \Big\|(\Lambda_{\II_{\Q_1}^{\rm opt}}^*-{\rm Id})((\II_{ \Q_2}^{\rm opt})^{y*} (Q_3^z))\Big\|\nonumber\\
=& \inf_{\II_{\Q_i},\II_{\Q_2}} D_{\Q_1}( \Q_2, \Q_3, \II_{\Q_1},\II_{ \Q_2}).
\end{align}
The same argument can be straightforwardly generalized to longer sequences. However, this result could have been obtained straightforwardly by noticing that we are restricted to mappings of the space into itself, hence, the only invertible operations are unitary transformations. It is not obvious how to extend it to arbitrary channels.

For the special case of qubit channels, we can prove that every preprocessing is a free operation. This result is based on the notion of {\it commutativity preserving channel} introduced in Ref.~\cite{HuPRA2012}, where it was proven that any qubit map $\Lambda$ that is unital, i.e., $\Lambda(\openone)=\openone$, preserve the commutativity of Hermitian operators, namely $[A, B]=0$ implies $[\Lambda(A),\Lambda(B)]=0$ for any $A,B$. The argument is very simple and we, thus, summarize it here.  Since $[A, B]=0$, they have a commong eigenbases, hence the condition $[\Lambda(A),\Lambda(B)]=0$, for all $A,B$, is equivalent to the condition $[\Lambda(\ketbra{\phi}),\Lambda(\ketbra{\psi})]=0$, for all $\ket{\phi},\ket{\psi}$ with $\mean{\phi |\psi}=0$. By linearity, the latter can be written as
$\frac{1}{2}[\Lambda( \ketbra{\phi}) + \Lambda(\ketbra{\psi}), \Lambda( \ketbra{\phi}) - \Lambda(\ketbra{\psi})]
= \frac{1}{2}[\Lambda(\openone), \Lambda( \ketbra{\phi}) - \Lambda(\ketbra{\psi})]$, 
hence it is always zero if $\Lambda(\openone)=\openone$. To conclude, it is sufficient to notice that qubit POVMs are nondisturbing if and only if their elements commute~\cite{HW2010}.

A natural question is what happens if one has local, instead of global, preprocessing, namely, the application of the channel to a single measurement in the sequence. Clearly, a general local preprocessing will not be a free operation, e.g., one can rotate just one of a pair of commuting projective measurements and make them mutually disturbing.
In the following, we will show that local preprocessing via the the depolarizing channel, $\EE_{\rm dep}^*(Q_1^x)=\alpha Q_1^x + \tfrac{1-\alpha} d \tr{Q_1^x} \openone$, is a free operation.
Let us consider the measures of disturbance $D_{\td\Q_1}(\Q_2)$ and $D_{\Q_2}(\td\Q_1)$ separately.
For the latter quantity we immediately have that $D_{\Q_2}(\td\Q_1)\leq D_{\Q_2}(\Q_1)$ as $\big\|(\Lambda_{\II_{\Q_2}}^* - {\rm Id})(\td Q_1^y) \big\|\leq \alpha \big\| (\Lambda_{\II_{\Q_2}}^* - {\rm Id})(Q_1^y) \big\|$ follows from the unitality condition $\Lambda_{\II_{\Q_2}}^*(\openone) = \openone$, valid for all channels (TP property).
For the other term, i.e., $\big\| (\Lambda_{\II_{\td\Q_1}}^*-{\rm Id})Q_2^y\big\|$, we notice that any instrument of the form $\II_{\td\Q_1}^x=\alpha \II_{\Q_1}^x +\frac{1-\alpha}{d}\tr{Q_1^x}\Lambda_D$, for any choice of a channel $\Lambda_D$,  will be a valid instrument for $\td\Q_1$ as again $\Lambda^*_D(\openone)=\openone$. By taking, in particular, 
$\Lambda_D=\id$, we have
\begin{equation}
\begin{gathered}
\big\| (\Lambda_{\II_{\td\Q_1}}^*-{\rm Id})(Q_2^y) \big\|= \big\| \alpha \Lambda_{\II_{\Q_1}}^*(Q_2^y)+  (1-\alpha) Q_2^y - Q_2^y \big\| \\
 = \alpha \big\| (\Lambda_{\II_{\Q_1}}^*-{\rm Id}) (Q_2^y) \big\|  ,
\end{gathered}
\end{equation}
and thus we have 
$D_{\td\Q_1}(\Q_2) \leq D_{\Q_1}(\Q_2)$ since $|\alpha| \leq 1$. 

This argument can be extended to arbitrary sequences of POVMs. Let us consider now a sequence of three POVMs, with a fixed ordering $\Q_1 \rightarrow \Q_2 \rightarrow \Q_3$. If the depolarizing channel is applied to $\Q_1$ or $\Q_3$, the reasoning is identical to the case of two measurements above. Let us then discuss the case of a depolarizing channel applied to $\Q_2$ and define $\td Q_2^y :=\EE_{\rm dep}^*(Q_2^y)=\alpha Q_2^y + \tfrac{1-\alpha} d \tr{Q_2^y} \openone$. We define an instrument for $\Q_2$ like above as $\II_{\td\Q_2}^y=\alpha \II_{\Q_2}^y +\frac{1-\alpha}{d}\tr{Q_2^y}\Lambda_D$, with $\Lambda_D$ being any channel. To have a more compact notation, we define $\II_D^y:=\frac{1}{d}\tr{Q_2^y}\Lambda_D$.

We can then write the disturbance quantifier for this fixed sequence as the infimum over ${\II_{\Q_1},\II_{\td \Q_2}}$ of the expression
\begin{equation}
D_{\td \Q_2}(\Q_3, \II_{\td \Q_2}) + D_{\Q_1}(\td \Q_2, \Q_3, \II_{\Q_1},\II_{\td \Q_2}).
\end{equation}
By substituting the definition of $\td\Q_2, \II_{\td\Q_2}$, we have
\begin{widetext}
\begin{equation}
\begin{split}
\sum_z \| \alpha ( \Lambda_{\Q_2}^* - \id) Q_3^z + (1-\alpha) ( \Lambda_D^* - \id) Q_3^z\| 
+ \sum_{yz}  \| \alpha(\Lambda_{\Q_1}^* - \id) (\II_{\Q_2}^* \Q_3)_{yz} +  (1-\alpha) (\Lambda_{\Q_1}^* - \id) (\II_{D}^*\Q_3)_{yz}\| \\
\leq \alpha \left[ D_{ \Q_2}(\Q_3, \II_{ \Q_2}) + D_{\Q_1}( \Q_2, \Q_3, \II_{\Q_1},\II_{ \Q_2}) \right]+ (1-\alpha) \left[ \sum_z \| ( \Lambda_D^* - \id) Q_3^z\| 
+ \sum_{yz}  \| (\Lambda_{\Q_1}^* - \id) (\II_{D}^*\Q_3)_{yz}\| \right]\\
= \alpha \left[ D_{ \Q_2}(\Q_3, \II_{ \Q_2}) + D_{\Q_1}( \Q_2, \Q_3, \II_{\Q_1},\II_{ \Q_2}) \right]+ (1-\alpha) \left[ \sum_z \| ( \Lambda_D^* - \id) Q_3^z\| 
+ \sum_{z}  \| (\Lambda_{\Q_1}^* - \id) \Lambda_D^* Q_3^{z}\| \right].
\end{split}
\end{equation}
\end{widetext}
To conclude the proof, it is sufficient to choose $\Lambda_D = \Lambda_{\Q_2}$ and to notice that 
$D_{\Q_1}(\Q_2, \Q_3, \II_{\Q_1},\II_{\Q_2}) \geq D_{\Q_1}( \Lambda_{\Q_2}^*\Q_3, \II_{\Q_1})$. In fact,
\begin{equation}\label{eq:ine_tot_inst}
\begin{split}
 D_{\Q_1}( \Lambda_{\Q_2}^*\Q_3, \II_{\Q_1}) =\sum_{z}  \| (\Lambda_{\Q_1}^* - \id)\sum_y (\II_{\Q_2}^* \Q_3)_{yz}\|\\
 \leq  \sum_{yz}  \| (\Lambda_{\Q_1}^* - \id) (\II_{\Q_2}^* \Q_3)_{yz}\| = D_{\Q_1}(\Q_2, \Q_3, \II_{\Q_1},\II_{\Q_2}).
\end{split}
\end{equation}

Since we proved all three cases, and the minimization over instruments is done separately for each permutation, we can conclude that.
\begin{equation}
{\rm MR}(\td \Q_1, \Q_2, \Q_3)\leq {\rm MR}(\Q_1, \Q_2, \Q_3) .
\end{equation}
Clearly, since the measure is symmetric under exchange of its arguments, the same relation holds if we apply the depolarizing channel to $\Q_2$ or $\Q_3$.

It is clear that such an argument can be extended to longer sequences. Let us consider a sequence of $n$ measurements $\Q_1,\ldots,\Q_n$, where the depolarizing channel is applied to some measurement $s$, i.e., ${\Q_s\mapsto \td\Q_s}$. By using the same choices of instruments for $\td\Q_s$ as above, i.e., $\Lambda_D = \Lambda_{\Q_s}$, and for $1<k<s<n$, we have
\begin{widetext}
\begin{align}
\nonumber\sum_{q_{k},\ldots,q_n} \| (\Lambda_{\Q_{k-1}}^*-\id) (\II_{\Q_k}^*\cdots\II_{\td\Q_s}^*\cdots\II_{\Q_{n-1}}^*\Q_n)_{q_k\ldots,q_n}\|\\
\nonumber \leq \alpha  \sum_{q_{k},\ldots,q_n} \| (\Lambda_{\Q_{k-1}}^*-\id) (\II_{\Q_k}^*\cdots\II_{\Q_s}^*\cdots\II_{\Q_{n-1}}^*\Q_n)_{q_k\ldots,q_n}\|  \\
\nonumber + (1-\alpha) \sum_{q_{k},\ldots,q_n} \| (\Lambda_{\Q_{k-1}}^*-\id) (\II_{\Q_k}^*\cdots\II_{D}^*\cdots\II_{\Q_{n-1}}^*\Q_n)_{q_k\ldots,q_n}\| \\
\nonumber =  \alpha  \sum_{q_{k},\ldots,q_n} \| (\Lambda_{\Q_{k-1}}^*-\id) (\II_{\Q_k}^*\cdots\II_{\Q_s}^*\cdots\II_{\Q_{n-1}}^*\Q_n)_{q_k\ldots,q_n}\|  \\
\nonumber + (1-\alpha) \sum_{q_{k},\ldots,\cancel{q_s},\ldots,q_n} \| (\Lambda_{\Q_{k-1}}^*-\id) (\II_{\Q_k}^*\cdots\Lambda_{\Q_s}^*\cdots\II_{\Q_{n-1}}^*\Q_n)_{q_k \ldots \cancel{q_s}\ldots  q_n}\|\\
 \leq \sum_{q_{k},\ldots,q_n} \| (\Lambda_{\Q_{k-1}}^*-\id) (\II_{\Q_k}^*\cdots\II_{\Q_s}^*\cdots\II_{\Q_{n-1}}^*\Q_n)_{q_k \ldots q_n}\|,
\end{align}
\end{widetext}
where we used the same argument as in Eq.~\eqref{eq:ine_tot_inst} for the last inequality. The case $s=k$ is identical to the above, and similarly for the case $s=1$ and $s=n$. 

As a consequence, we can conclude that the application of a depolarizing channel is a free operation with respect to the measure MR, independently of where it is applied.

\bibliography{ND}{}

\end{document}